\documentclass[11pt]{article}
\usepackage{geometry}                
\geometry{letterpaper}                   

\usepackage{graphicx}
\usepackage{epsfig}
\usepackage{color}
\usepackage{subfigure}
\usepackage{graphics}

\usepackage{amssymb}
\usepackage{amsthm}
\usepackage{amsmath}
\usepackage{amscd}
\usepackage{epstopdf}

\usepackage[ruled,longend]{algorithm2e}

\usepackage{fullpage}

\usepackage{authblk}

\usepackage[ruled]{algorithm2e}

\newtheorem{theorem}{Theorem}[section]
\newtheorem{lemma}[theorem]{Lemma}

\newtheorem{corollary}[theorem]{Corollary}

\newtheorem{definition}[theorem]{Definition}

\newtheorem{invariant}[theorem]{Invariant}
\newtheorem*{inductionHypothesis}{Induction Hypothesis}
\newtheorem*{baseCase}{Base Case}

\newtheorem*{notation}{Notation}

\newtheorem{remark}[theorem]{Remark}

\title{Simpler, Linear-time Transitive Orientation via Lexicographic Breadth-First Search}

\author{Marc Tedder\thanks{The author wishes to thank NSERC for its financial support throughout the research culminating in this paper.
Email: \texttt{mtedder@cs.toronto.edu}}} 
\affil{University of Toronto}

\date{}

\begin{document}

\maketitle

\begin{abstract}
Comparability graphs are the undirected graphs whose edges can be directed so that the resulting directed graph is transitive.  They are related to posets and have applications in scheduling theory.  This paper considers the problem of finding a transitive orientation of a comparability graph, a requirement for many of its applications.  A linear-time algorithm is presented based on an elegant partition refinement scheme developed elsewhere for the problem.  The algorithm is intended as a simpler and more practical alternative to the existing linear-time solution, which is commonly understood to be difficult and mainly of theoretical value.  It accomplishes this by using Lexicographic Breadth-First Search to achieve the same effect as produced by modular decomposition in the earlier linear-time algorithm.  
\end{abstract}

\section{Introduction}

\emph{Comparability graphs} are a well-studied class of graphs related to posets.  Every poset can be represented by a transitive, directed, acyclic graph; comparability graphs are formed by ignoring the directions on these edges.  Equivalently, comparability graphs are those undirected graphs whose edges can be directed so that the resulting directed graph is transitive: whenever there is an edge from $a$ to $b$ and one from $b$ to $c$, there is also an edge from $a$ to $c$.  Given a comparability graph, the \emph{transitive orientation problem} asks you to find such a transitive orientation of its edges.  An example of a comparability graph is provided in figure~\ref{fig:example}, while the same graph with its edges transitively oriented appears in figure~\ref{fig:transitiveOrientation}.    Graphs whose complement is a comparability graph are called \emph{co-comparability} graphs.  A related problem is to compute a transitive orientation of the complement of a co-comparability graph.  \emph{Permutation graphs} are those that are both comparability graphs and co-comparability graphs; these correspond to partial orders of dimension-2.  (See~\cite{Brandstadt:1999:GCS:302970} for a summary of these graph classes.)

Many interesting combinatorial problems can be efficiently solved for comparability graphs and co-comparability graphs (and thus permutation graphs) once a transitive orientation is known~\cite{m.mcconnell:linear-time}.  This is the source of their many applications in scheduling theory~\cite{Mohring:scheduling,Korte:scheduling}.  Earlier solutions to the transitive orientation problem ran in time $O(n^2)$~\cite{Spinrad:nsquared} and $O(\delta{}m)$~\cite{Golumbic:transitive,Ghouila-Houri:transitive,Pnueli:transitive}, where $\delta$ is the maximum degree of a vertex in the graph.  The approach developed in the $O(n^2)$ algorithm of~\cite{Spinrad:nsquared} was later extended to $O(m\log{n})$~\cite{m.mcconnell:partitioning} and $O(n+m)$~\cite{m.mcconnell:linearTransitive} algorithms.  While the former presents an elegant solution that is straightforward in its understanding and implementation, linear-time in the latter was achieved at the expense of such simplicity.  Although the latter's algorithm's difficulty has been acknowledged by the authors in~\cite{m.mcconnell:partitioning,spinrad:book}, it has so far remained the only linear-time solution to the transitive orientation problem.  Part of the acknowledged difficulty is its reliance on the modular decomposition algorithm also developed in~\cite{m.mcconnell:linearTransitive}.  

Modular decomposition has long been known to simplify the computation of transitive orientations~\cite{gallai:transitive} in addition to numerous other problems~\cite{mohring:applications}.  The algorithm in~\cite{m.mcconnell:linear-time} and another independently developed in~\cite{habib:modular} were the first linear-time modular decomposition algorithms.  Unfortunately, both these algorithms were complex enough to be deemed mainly theoretical contributions~\cite{bretscher:lexbfs}.  But the practical importance of modular decomposition led many to attempt simpler, linear-time alternatives~\cite{habib:stacs,dahlhaus:modular} (note that while~\cite{habib:swat} claimed linear-time, the authors have since noted an error in the paper~\cite{tedder:communication}), something finally achieved in~\cite{tedder:modular} by unifying earlier approaches.  Similar attempts at a simpler, linear-time alternative to the transitive orientation algorithm of~\cite{m.mcconnell:linearTransitive} have so far not appeared.

The elegant $O(m\log{n})$ transitive orientation algorithm of~\cite{m.mcconnell:partitioning} is based on partition refinement techniques (see~\cite{habib:stacs} for an overview).  Some hope was expressed there that a simpler linear-time implementation of these techniques may be possible.  This paper posits an algorithm that proposes to answer that question in the affirmative.  Where previous algorithms used modular decomposition, this paper achieves the same effect with Lexicographic Breadth-First Search (LBFS) (see~\cite{corneil:lexbfs} for a survey).  The key point is that LBFS is itself easily implemented in linear-time using partition refinement~\cite{habib:stacs}.  The result is a simpler, linear-time alternative to solving the transitive orientation problem.  The paper also suggests how its approach with LBFS can be adapted to compute a transitive orientation of the complement of a co-comparability graph in linear-time, which necessarily avoids explicitly computing that complement.  All of this can be implemented using partition refinement techniques that amount to basic list and tree traversals.

\section{Background and Overview} \label{sec:background}

\subsection{Preliminaries}

Only simple graphs are considered in this paper.  All graphs should be assumed to be undirected unless otherwise specified.  The set of vertices of a graph $G$ will be denoted $V(G)$ and its set of edges $E(G)$.  Throughout this paper $n$ will be used to refer to $|V(G)|$ and $m$ for $|E|$.  The graph induced by the set of vertices $S \subseteq V(G)$ will be denoted $G[S]$.  The set of neighbours of a vertex $x$ will be denoted $N(x)$, while $N[x] = N(x) \cup \{x\}$.  Depending on the context, $xy$ will either represent the undirected edge $\{x,y\}$ or the directed edge $(x,y)$.  The connected components of a graph will simply be referred to as its \emph{components}; the connected components of the complement of a graph will be called its \emph{co-components}.         

A vertex $x$ is said to be \emph{universal} to a set of vertices $S$ if $x$ is adjacent to every vertex in $S$, and $x$ is \emph{isolated} from $S$ if it is adjacent to no vertex in $S$.  A set $S$ is said to be \emph{universal} to another set $S'$ if every vertex in $S$ is adjacent to every vertex in $S'$ (i.e. there is a \emph{join} between the two sets).  On the other hand, $x$ \emph{splits} $S$ if it is adjacent to at least one but not every vertex in $S$, and a set $S$ \emph{splits} another set $S'$ if $S \cap S' \ne \emptyset$ and $S' - S \ne \emptyset$.  Meanwhile, two sets $S$ and $S'$ are said to \emph{overlap} if $S - S'$, $S \cap S'$, and $S' - S$ are all non-empty.  A set $S \subseteq V(G)$ is \emph{trivial} if $S = V(G)$ or $|S| = 1$.  

The vertices of a rooted tree will be referred to as \emph{nodes}, with the non-leaf vertices specially designated as \emph{interior nodes}.  An \emph{ancestor} of a node $x$ is a vertex other than $x$ on the path from that node to the root.  A node $x$ is a \emph{descendant} of the node $y \ne x$ if $y$ appears on the path from $x$ to the root.  

\subsection{Comparability Graphs} 

A directed graph is transitive if whenever there are edges $ab$ and $bc$ there is also the edge $ac$.  Vertices whose incident edges are all directed the same way are either \emph{sources} (directed outward) or \emph{sinks} (directed inward).  An undirected graph whose edges can be transitively oriented is a \emph{comparability graph}.  A \emph{linear extension} of a comparability graph $G$ is an ordering  of $V(G)$ that induces the following transitive orientation: the edge $ab$ is directed toward $b$ if and only if $a$ appears before $b$ in the ordering.  An example of a comparability graph appears in figure~\ref{fig:example}; while an example of a transitive orientation of its edges appears in figure~\ref{fig:transitiveOrientation}.  Observe that $x,z,q,w,r,v,y,u,a,b$ is a linear extension for this graph.    

\begin{figure}  
\begin{center}  
\includegraphics[scale=0.85]{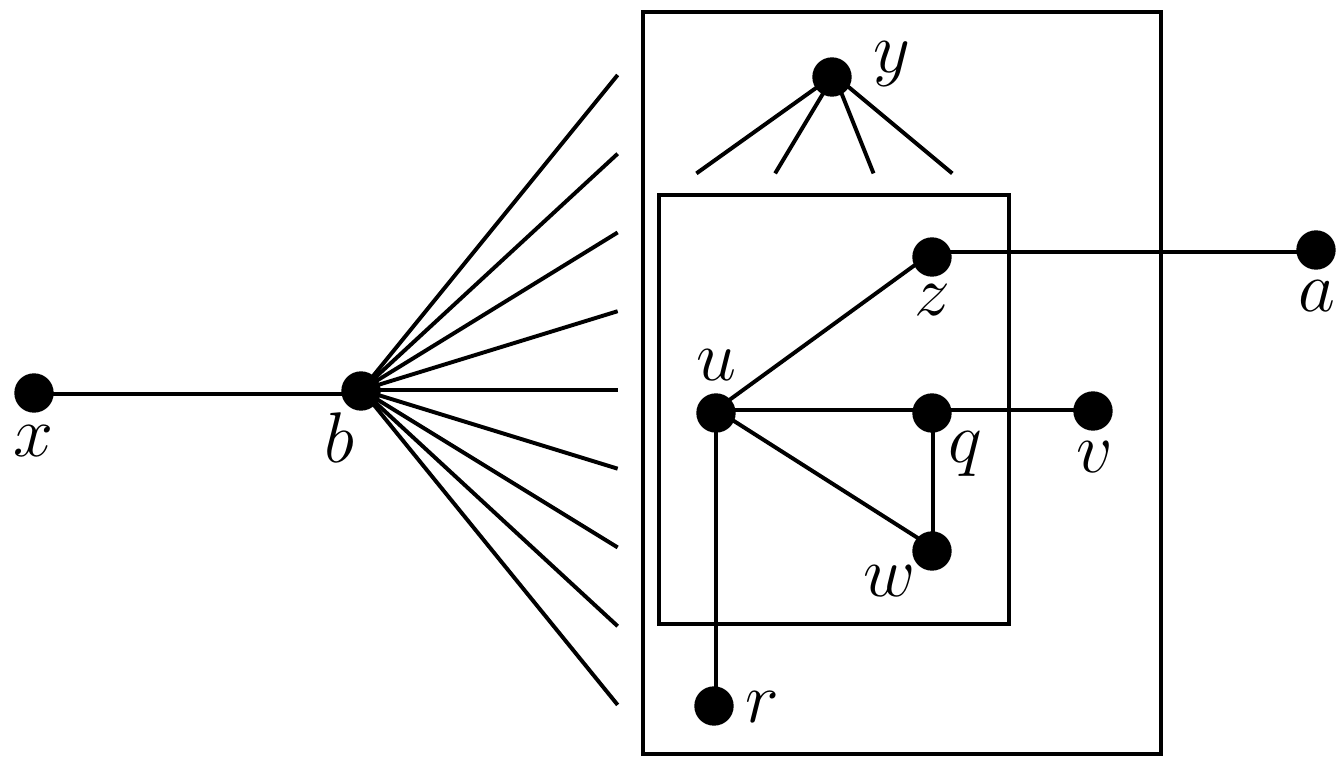}
\caption{A comparability graph -- vertices are grouped in boxes, and edges from one vertex to the perimeter of a box means that vertex is universal to the vertices inside the box.}
\label{fig:example}
\end{center}   
\end{figure}

\begin{figure}  
\begin{center}  
\includegraphics[scale=0.85]{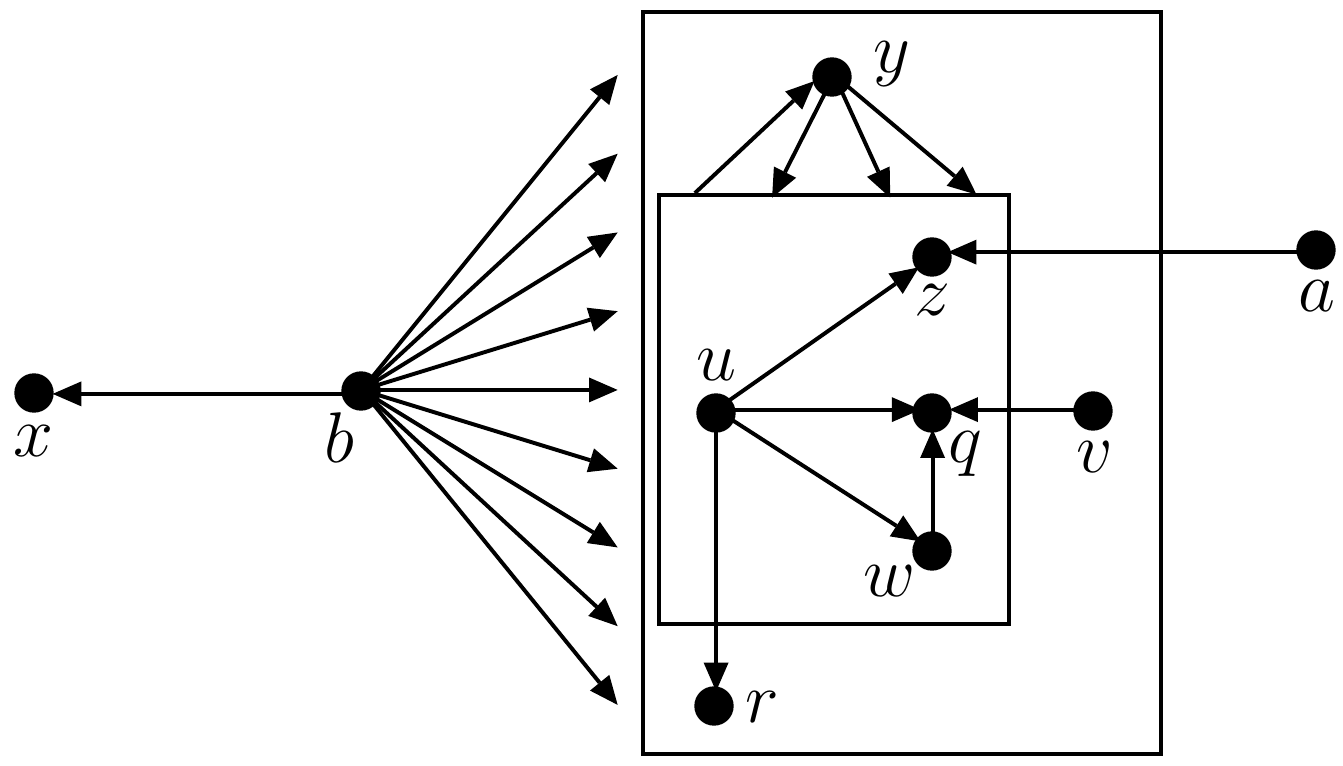}
\caption{The comparability graph of figure~\ref{fig:example} with its edges transitively oriented.}
\label{fig:transitiveOrientation}
\end{center}   
\end{figure}

\subsection{Modular Decomposition}

A \emph{module} is a set of vertices $M$ such that no vertex $x \notin M$ splits it.  A module $M$ is \emph{strong} if it overlaps no other module.  A graph is \emph{prime} if it contains no non-trivial module.   Note that the graph in figure~\ref{fig:example} is prime.  The \emph{modular decomposition tree} for the graph $G$ is the rooted tree defined as follows.

\begin{itemize}

\item The set of leaves corresponds to the set of vertices $V(G)$.

\item The set of interior nodes corresponds to the set of strong modules.

\item There is an edge between two nodes if one is the smallest strong module containing the other.

\end{itemize}

Every interior node in the modular decomposition tree defines a \emph{quotient graph}: the graph induced by the set of vertices formed by selecting a single leaf descending from each of its children.  It can be shown that the quotient graph is invariant under different choices of leaves in that the resulting graphs are all isomorphic.  Every quotient graph is either a clique, independent set, or is prime (see, for example,~\cite{habib:modular}).  

Modules play an important role in the transitive orientation of comparability graphs.  Given a module $M$ in a comparability graph $G$, and a vertex $x \in M$, it suffices to separately compute a transitive orientation for $G[M]$ and $G[(V(G) - M) \cup \{x\}]$: merely replace $x$ in the latter with $M$.  This can be applied to the modular decomposition tree's strong modules and quotient graphs.  Since cliques and independent sets can trivially be oriented, the problem therefore reduces to the prime case.  This is made somewhat easier by the fact that prime comparability graphs have a unique transitive orientation (up to reversal of the directions on all edges)~\cite{gallai:transitive}.

\subsection{Partition Refinement} \label{sec:partitionRefinement}

A \emph{partition} of the set of vertices $V(G)$ is a collection of disjoint subsets $\mathcal{P} = \{P_1,\ldots,P_k\}$ such that $V(G) = P_1 \cup \cdots \cup P_k$.  Each $P_i$ is called a \emph{partition class}.  A \emph{refinement} of the partition $\mathcal{P}$ is one in which every partition class is a (not necessarily proper) subset of a partition class in $\mathcal{P}$.  The partition of $S \subseteq V(G)$ induced by $\mathcal{P}$, denoted $\mathcal{P}[S]$, is obtained by taking the intersection of $S$ with every partition class in $\mathcal{P}$ (ignoring any empty intersections).  Partition refinement frequently operates on \emph{ordered partitions}: those having an ordering imposed on their partition classes, written $\mathcal{P} = P_1,\ldots,P_k$.  

Refinements in this paper will be computed according to the pivot operation defined by algorithm~\ref{alg:pivot}.  The input vertex $p$ is referred to as a  \emph{pivot}.  Algorithm~\ref{alg:pivot} subdivides partition classes according to $N(p)$ as follows:

\begin{remark} \label{rem:pivot}
Suppose that $A,B$ are partition classes (in order) that replace a partition class $P$ during algorithm~\ref{alg:pivot}.  Then $A,B \ne \emptyset$, $p \notin P$, and if $p$ is in an earlier partition class than $P$, then $A \cap N(p) = \emptyset$ and $B \subseteq N(p)$, otherwise $A \subseteq N(p)$ and $B \cap N(p) = \emptyset$.  Moreover, if $P$ is not replaced during algorithm~\ref{alg:pivot}, then either $p \in P$, $P \cap N(p) = \emptyset$, or $P \subseteq N(p)$.
\end{remark}

\begin{algorithm} [ht]

\BlankLine

\KwIn{An ordered partition $\mathcal{P} = P_1,\ldots,P_k$ and a distinguished vertex $p \in P_i$.}

\KwOut{A refinement of $\mathcal{P}$.}

\BlankLine

\ForEach{$P_j \in \mathcal{P}, j \ne i$ such that $N(p)$ splits $P_j$} {

\BlankLine

$A \gets P_j - N(p)$\;
$B \gets P_j - A$\;

\BlankLine

\lIf{$i < j$}{replace $P_j$ in $\mathcal{P}$ with $A,B$ in order}
\lElse{replace $P_j$ in $\mathcal{P}$ with $B,A$ in order}}
\BlankLine
\KwRet{$\mathcal{P}$}\;

\BlankLine

\caption{$Pivot(\mathcal{P},p)$ \cite{m.mcconnell:linear-time}}  \label{alg:pivot}
\end{algorithm}

Observe that in algorithm~\ref{alg:pivot}, it may happen that one of $A$ or $B$ is the empty set.  By convention, throughout this paper, we take the approach that such an empty set is not inserted into the partition as a partition class.  Later in this paper, we will have need to generalize the above pivot operation.  There too we adopt this convention.  In general, whenever partitions and refinement are concerned in this paper, the empty set is not permitted.

The pivot operation above can be used to compute a source vertex in a prime comparability graph according to algorithm~\ref{alg:source}~\cite{m.mcconnell:linear-time}.  If the partition class $P$ at the end of algorithm~\ref{alg:source} did not contain a single vertex, then it would be a non-trivial module, by remark~\ref{rem:pivot}, contradicting the input graph being prime.  The fact that this vertex must be a source (or equivalently, a sink) vertex is due to the following easily verified invariant: throughout the algorithm, all edges between any vertex in $P$ and it neighbours outside $P$ must be directed the same way.  Applying algorithm~\ref{alg:source} to the graph in figure~\ref{fig:example}, where the sequence of pivots chosen during its execution is $a,z,b,u,y,q,w,r,v,x$, results in the vertex $x$ correctly being returned as a source vertex.

\begin{algorithm} [ht]

\BlankLine

\KwIn{A prime comparability graph $G$.}

\KwOut{A source vertex in the transitive orientation of $G$.}

\BlankLine

pick some vertex $x \in V(G)$\;

\BlankLine
let $\mathcal{P}$ be the ordered partition $\{x\},V(G) - \{x\}$\;
let $P$ be $\mathcal{P}$'s last partition class (throughout what follows)\;

\BlankLine

\lWhile{there is a vertex $z \notin P$ that has not been pivot} { $Pivot(\mathcal{P},z)$}

\BlankLine

\KwRet{$P$}\;

\BlankLine

\caption{$Source(G)$~\cite{m.mcconnell:linear-time}} \label{alg:source}
\end{algorithm}

An ordered partition $\mathcal{P}$ of $V(G)$ is \emph{consistent with a linear extension} if there is a way of permuting the vertices in each of its partition classes that results in a linear extension.  Observe that if $x$ is a source vertex, then $\{x\},V(G) - \{x\}$ is consistent with a linear extension.  Also observe that if the input to algorithm~\ref{alg:pivot} is consistent with a linear extension, then so too will be its output.  Finally, observe that any non-trivial partition of a prime graph contains a partition class $P$ that is split by some vertex $y \notin P$.  

The above observations suggest the following simple algorithm for computing a linear extension of a prime comparability graph $G$: first compute a source vertex $x$ using algorithm~\ref{alg:source}, then repeatedly apply algorithm~\ref{alg:pivot}, starting with the ordered partition $\{x\},V(G) - \{x\}$, until all partition classes are singleton sets.  This approach was first articulated in~\cite{m.mcconnell:linear-time} and later generalized to the non-prime case in~\cite{habib:stacs}.  The latter uses the following data-structure to represent ordered partitions:

\begin{itemize}

\item The ordered list of partition classes is represented as a doubly-linked list.

\item The elements within each partition class are represented as a doubly-linked list.

\item Each partition class maintains two indices indicating its \emph{range}: the number of elements in partition classes before it, plus the number of its own elements.

\item Each partition class maintains a pointer to the first element in the doubly-linked list representing its elements.

\item Each element within a partition class maintains a \emph{parent} pointer to its containing partition class.

\end{itemize}

\noindent
The corresponding implementation of algorithm~\ref{alg:pivot} proceeds as follows:

\begin{itemize}

\item Traverse the adjacency list of the pivot to determine the partition classes different than its own that contain at least one of its neighbours.

\item Use the range indices to determine the relative position of the partition class containing the pivot and those encountered above.

\item For each partition class encountered above, create an empty partition class immediately before those that appear earlier than the pivot, and a new partition class immediately after those that appear after the pivot.

\item Traverse the adjacency list of the pivot once more, and move each neighbour to the adjacent empty partition class created above, updating the range of the affected partition classes to reflect this change.
\end{itemize}

Algorithm~\ref{alg:pivot} clearly runs in time $O(|N(p)|)$ using the above implementation.  The efficiency of the approach described above for computing linear extensions therefore depends on the number of times algorithm~\ref{alg:pivot} is invoked.  To limit the number of times a vertex is a pivot,~\cite{m.mcconnell:linear-time} developed an ingenious rule: only reuse a vertex as pivot when its containing partition class is at most half the size it was the last time it was pivot.  The result is an $O(m\log{n})$ transitive orientation algorithm for prime graphs.  The same rule was applied in~\cite{habib:stacs} for the non-prime case, the result also being an $O(m\log{n})$ algorithm.

\subsection{Lexicographic Breadth-First Search} \label{sec:LBFS}

In Breadth-First Search, the order in which a vertex is explored depends only on its earliest explored neighbour.  Lexicographic Breadth-First Search (LBFS) goes further by using all previously explored neighbours to determine when a vertex is explored.  This is accomplished through the lexicographic labelling scheme specified in algorithm~\ref{alg:LBFS}: throughout the algorithm vertices carry lexicographic labels that have been assigned to them by their previously explored neighbours -- vertices with earlier explored neighbours will have lexicographically lager labels -- and on each iteration of the algorithm, a vertex with lexicographically largest label is chosen.  It is important to note that on any iteration of the algorithm, many vertices may share the lexicographically largest label.

To understand the operation of algorithm~\ref{alg:LBFS}, consider its operation on the graph in figure~\ref{fig:example}.  One possible output of the algorithm is $x,b,y,u,z,q,w,r,v,a$.  To see why, notice that initially, all vertices share the same label, and therefore any one of them can be chosen, say $x$.  From there, $x$ labels its only neighbour, $b$, and all other vertices retain their empty label.  Therefore $b$ has lexicographically largest label and must be chosen.  From there, $b$ labels its neighbours, while others retain their empty label, and so on.

\begin{algorithm} 

\BlankLine

\KwIn{A graph $G$.}

\KwOut{An ordering $\sigma$ of $V(G)$.}

\BlankLine

initialize $\sigma$ as the empty list\;
initialize each vertex in $V(G)$ by assigning it an empty label\;

\BlankLine

\While{$|\sigma| \ne |V(G)|$} {
\BlankLine
select a vertex $x \notin \sigma$ having lexicographically largest label and append it to $\sigma$\;

\lForEach{vertex $y \in N(x) - \sigma$} {append $|V(G)| - |\sigma| + 1$ to $y$'s label}
\BlankLine

}

\BlankLine

\KwRet{$\sigma$}\;

\BlankLine

\caption{$LBFS(G,x)$~\cite{rose:lexbfs}} \label{alg:LBFS}
\end{algorithm}

As we saw above, the initial vertex of any lexicographic breadth-first search is selected arbitrarily.  A variant of algorithm~\ref{alg:LBFS} allows this initial vertex to be specified as an input to the algorithm.  This is accomplished by assigning the input vertex some large initial label, say $\infty$ prior to any iteration of the while loop.  In proving results and defining things about LBFS we will use algorithm~\ref{alg:LBFS} as it is specified above.  However, it will be convenient throughout this paper to sometimes refer to the variant taking an input vertex, with the intention being clear from context.

Although algorithm~\ref{alg:LBFS} produces an ordering of vertices, more important for this paper will be the labels it assigns in doing so.  These are important in so far as they define \emph{slices}: any set of vertices sharing the lexicographically largest label immediately before an iteration of the while-loop in algorithm~\ref{alg:LBFS}.  Hence, there are $|V(G)| = n$ different slices, one for every iteration of the while-loop, and the set of slices can be ordered according to the iterations on which the slices are defined.  The vertex selected from the slice and added to $\sigma$ on each iteration is the slice's \emph{initial vertex}.  Notice that the set $V(G)$ is always the first slice.  Building on our example earlier, one possible ordered set of slices and initial vertices for algorithm~\ref{alg:LBFS} on the input graph in figure~\ref{fig:example} is the following:

\begin{itemize}

\item $S_1 = \{x,b,y,u,v,r,q,z,w\}$, initial vertex $x$;

\item $S_2 = \{b\}$, initial vertex $b$;

\item $S_3 = \{y,u,v,r,q,z,w\}$, initial vertex $y$;

\item $S_4 = \{u,v,r,q,z,w\}$, initial vertex $u$;

\item $S_5 = \{z,q,w\}$, initial vertex $z$;

\item $S_6 = \{q,w\}$, initial vertex $q$;

\item $S_7 = \{w\}$, initial vertex $w$;

\item $S_8 = \{r\}$, initial vertex $r$;

\item $S_9 = \{v\}$, initial vertex $v$;

\item $S_{10} = \{a\}$, initial vertex $a$.

\end{itemize}

It is not difficult to see the hierarchical relationship between slices in the example above.  This is formalized in the following remark:

\begin{remark} \label{rem:sliceContainment}
If $S$ and $S'$ are two distinct slices, then either $S \subset S'$, $S' \subset S$, or $S \cap S' = \emptyset$.   
\end{remark}

This leads to the idea of \emph{maximal subslices}: $S'$ is a maximal subslice of $S$ if $S' \subset S$ and there is no other slice $S''$ such that $S' \subset S'' \subset S$.  By convention, the initial vertex of a slice is also considered one of its maximal subslices.  Further, throughout this paper, we will generally use $S_1,\ldots,S_n$ to denote the set of all slices, and $x,S_1,\ldots,S_k$ to denote a set of maximal subslices, where $x$ is the initial vertex.  

Observe that the set of maximal subslices is clearly a partition of that slice.  And of course, the ordering of all slices defines an ordering of any set of maximal subslices (with the convention that the initial vertex always appears first).  

An edge is \emph{active} for a slice $S$ if its endpoints reside in different maximal sublices of $S$.  The notion of active edges is adapted from~\cite{dahlhaus:modular}, where it was introduced for the hierarchical set of strong modules that arise in modular decomposition.  Their reason for doing so relates to the following observation, presented here in terms of slices.
  
\begin{remark} \label{rem:activeOnce}
For any edge there is a single slice for which it is active.
\end{remark}

The following are properties of active edges specific to the LBFS context, each a direct consequence of the labelling scheme employed by algorithm~\ref{alg:LBFS}:

\begin{remark} \label{rem:maximalSlice}
Let $S$ be a slice and $x,S_1,\ldots,S_k$ its maximal subslices in order.  Then:

\begin{enumerate}

\item Either $N(x) \cap S = S_1$ or $N(x) \cap S = \emptyset$;

\item Every vertex $y \in S_i$ is either universal to or isolated from $S_j, i < j$;

\item For every $S_i$ and $S_j, 1< i < j$, there is a vertex $y \in S_{\ell}, \ell < i$, such that $y$ is universal to $S_i$ and isolated from $S_j$.

\end{enumerate}

\end{remark}

The last two remarks form the basis for the transitive orientation algorithm presented in this paper.  They imply many other properties of active edges that will be presented as needed throughout the paper.  One such property is the following:

\begin{lemma} \label{lem:sameDirection}
Consider an LBFS ordering of a comparability graph $G$.  Let $S$ be one of the slices and ${x},S_1,\ldots,S_k$ its maximal sub slices in order.  Assume that $N(x) \cap S \ne \emptyset$.  Then in any transitive orientation, the edges between all vertices in some $S_i, i > 1$, and any vertex $z \notin S_i$ are all directed the same way.
\end{lemma}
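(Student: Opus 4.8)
The plan is to read the conclusion \emph{per external vertex}: we fix $i>1$ and a vertex $z\notin S_i$ and show that in any transitive orientation the edges joining $z$ to $S_i$ all point the same way relative to $z$ (all toward $z$, or all away from it). It is worth noting that the stronger‑sounding statement ``all edges leaving $S_i$ point the same way'' is \emph{false} in general --- already a diamond with $S_i$ a single non‑central vertex has a transitive orientation in which its two incident edges point oppositely --- so the per‑vertex reading is the one to prove. If $z$ has at most one neighbour in $S_i$ there is nothing to do; otherwise fix a neighbour $a$ of $z$ in $S_i$, and it suffices to show that every other neighbour $b$ of $z$ in $S_i$ is seen by $z$ in the same direction as $a$.

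The only device needed is the elementary \textbf{forcing fact}: if $pq\in E(G)$ and $r\in N(p)\setminus N(q)$, then in any transitive orientation $pq$ and $pr$ are oriented the same way relative to $p$ (both directed toward $p$, or both directed away from $p$) --- for $p\to r$ together with $q\to p$ would force the non‑edge $q\to r$, and $r\to p$ together with $p\to q$ would force the non‑edge $r\to q$. From this I extract the engine of the argument: \emph{if $v$ is universal to $S_i$ and some vertex $w$ is adjacent to $v$ but isolated from $S_i$, then $v$ sees all of $S_i$ the same way it sees $w$}, hence in one fixed direction (apply the forcing fact to each edge $va$, $a\in S_i$, with $r=w$). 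Granting this engine, the lemma reduces to two cases. (i) If $z$ is universal to $S_i$, it is enough to exhibit a vertex $w\sim z$ isolated from $S_i$; then $z$ sees all of $S_i$ uniformly. (ii) If $z$ splits $S_i$ (if $z$ is isolated from $S_i$ there is nothing to prove), it is enough to exhibit a \emph{witness} $y$ universal to $S_i$ with $y\not\sim z$; by the engine $y$ sees all of $S_i$ in one direction, and the forcing fact applied to each edge $za$ ($a\in N(z)\cap S_i$) with $r=y$ (adjacent to $a$ but not to $z$) gives $z\to a\Leftrightarrow y\to a$, independent of $a$.

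What remains is to locate these vertices, and for $z\in S$ this is read off directly from remark~\ref{rem:maximalSlice} together with the hypothesis $N(x)\cap S\ne\emptyset$ (which, via remark~\ref{rem:maximalSlice}(1), forces $N(x)\cap S=S_1$, so that $x$ is universal to $S_1$ and isolated from $S_i$ because $i>1$). If $z=x$ there is nothing to prove. If $z\in S_\ell$ with $\ell<i$, then by remark~\ref{rem:maximalSlice}(2) $z$ is universal to or isolated from $S_i$; in the universal case take $w=x$ when $\ell=1$, and when $\ell\ge 2$ take for $w$ the vertex supplied by remark~\ref{rem:maximalSlice}(3) for the pair $(S_\ell,S_i)$, which is universal to $S_\ell$ (hence adjacent to $z$) and isolated from $S_i$. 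If $z\in S_j$ with $j>i$, then remark~\ref{rem:maximalSlice}(3) applied to $(S_i,S_j)$ yields a vertex $y\in S_\ell$ ($\ell<i$) universal to $S_i$ and isolated from $S_j\ni z$: this is the witness of case (ii), and its own auxiliary vertex is obtained just as $w$ was a moment ago. In every subcase the required ``vertex isolated from $S_i$ and adjacent to the relevant universal vertex'' exists because one may descend the finite chain of nested slices to $S$ itself, whose initial vertex $x$ lies in the subslice containing $S_i$ and is isolated from it.

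The genuinely delicate case --- and the one I expect to be the main obstacle --- is $z\notin S$. The idea is to replace $S$ by the active slice of the edges in question: using remark~\ref{rem:activeOnce} and remark~\ref{rem:sliceContainment} one shows that all edges joining $z$ to $S_i$ have one and the same active slice $S^z$, that $S^z\supsetneq S_i$, and that inside $S^z$ the vertex $z$ lies in one maximal subslice while $S_i$ sits inside a different one. Rerunning the analysis of the previous paragraph at the level of $S^z$ --- remark~\ref{rem:maximalSlice}(1)--(3) now applied to the subslices of $S^z$ --- again produces either a $w\sim z$ isolated from $S_i$ (case (i)) or a witness $y$ universal to $S_i$ with $y\not\sim z$ (case (ii)), and the auxiliary isolated vertices are again found by walking down the slice hierarchy to $S$, where $i>1$ keeps $x$ off $S_i$. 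This bookkeeping over the hierarchy of slices, rather than the forcing arguments themselves, is where the real work lies.
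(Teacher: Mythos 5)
Your proof is correct, and for $z \in S$ it is essentially the paper's argument made explicit: remark~\ref{rem:maximalSlice} supplies a vertex adjacent to the relevant ``universal'' vertex but isolated from $S_i$ (or universal to $S_i$ but isolated from $z$), and the standard forcing rule for transitive orientations then propagates a single direction across all edges between $z$ and $S_i$. The paper compresses this to three sentences, beginning ``Assume that $z \in S_j$,'' so your spelled-out forcing fact, your $\ell=1$ subcase (where the auxiliary vertex must be $x$ itself rather than one supplied by condition~3 of the remark), and your observation that the conclusion must be read per external vertex are faithful expansions rather than departures. The one genuine difference is your treatment of $z \notin S$: the paper's opening assumption silently restricts its proof to $z \in S$, even though the statement quantifies over every $z \notin S_i$. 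Your reduction to the smallest slice $S^z$ containing both $z$ and $S$ does close that gap, and the bookkeeping you flag as the main obstacle in fact collapses: in every subcase the witness produced inside $S^z$ is universal to the maximal subslice of $S^z$ containing $S$, hence universal to $S$, hence adjacent to $x$; and $x$, being isolated from $S_i$ because $i>1$ and $N(x)\cap S = S_1$, always serves as the required auxiliary vertex, so no descent through intermediate slices is needed. Note, though, that everywhere the lemma is actually invoked (in the correctness proof of algorithm~\ref{alg:linearExtension}) the external vertex lies in another maximal subslice of the same slice, so the paper never needs the case you labour over; still, your version is the one that matches the statement as written.
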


\begin{proof}
Assume that $z \in S_j$.  If $j < i$, then the result follows by condition~3 of remark~\ref{rem:maximalSlice}  So assume that $j > i$.  Then by the same remark, there is a vertex $q \in S_{\ell}, \ell < i$ that is universal to $S_i$ but isolated from $z$.  As with $z$, all edges from $q$ to $S_i$ must be directed the same way.  Therefore all edges between $S_i$ and $z$ must also be directed the same way.
\end{proof}

\subsection{Overview} \label{sec:overview}

This paper develops an algorithm to compute a linear extension of a prime comparability graph.  It uses the same basic approach as was described earlier in the context of partition refinement.  The difference is that instead of limiting the number of times a vertex is pivot, this paper limits what edges incident to a pivot are used each time.  The set of edges incident to each vertex will be partitioned, and every time a vertex is pivot, a unique subset of these incident edges will be processed.  In the end, each edge will only be processed once, allowing for linear-time transitive orientation, $O(n + m)$.

Partition refinement will remain the primary mechanism for computing the linear extension.  Algorithm~\ref{alg:pivot} will need to be generalized so that only a subset of vertices in a pivot's neighbourhood is processed.  The partition of the set of edges incident to each vertex will be defined in terms of the slices for which each edge is active.  This will necessitate computing an LBFS ordering, the corresponding set of slices, and the active edges they define.  The co-components of the graph induced by each slice will also need to be computed and treated as a kind of pseudo-slice for reasons that will be clear later.

The starting point for the algorithm is the same as before: computing a source vertex $x$ using algorithm~\ref{alg:source}.  The necessary LBFS ordering will be initiated from this source vertex.  Just as before, $\{x\},V(G) - \{x\}$ is consistent with a linear extension and will be used as the initial ordered partition.  Each slice will then be processed (in order) and two rounds of refinement undertaken for each.  Both will ensure that the resulting ordered partition remains consistent with a linear extension.  

The first round of refinement targets the initial vertex of the slice in question, ensuring that by the end, none of its neighbours share its partition class.  This has the effect of determining the directions on all the edges incident to that initial vertex.  With that, the directions on all other edges active for that slice can be determined using a second round of refinement.  To facilitate the first round of refinement, algorithm~\ref{alg:pivot} will be adapted so that only a single partition class (the one containing the initial vertex) is subdivided.  

It is critical for the running-time that during both rounds of refinement, only the edges active for that slice are processed.  However, as will be described later, knowing the co-components for each slice will allow the algorithm to ``cheat'' a little in this regard.

The rest of the paper is organized as follows.  Section~\ref{sec:initialization} outlines the various initialization steps that are needed, along with their correctness and running-time.  Next, section~\ref{sec:refinement} describes how refinement can produce a linear extension, and also includes correctness and running-time.  The paper concludes in section~\ref{sec:conclusion}.

\section{Initialization} \label{sec:initialization}

The following is needed before refinement can be applied to compute the desired linear extension: 

\begin{itemize}

\item A source vertex;

\item An LBFS ordering initialized from that source vertex;

\item The corresponding slices and active edges; and 

\item The co-components of each slice.  

\end{itemize}

Note that a source vertex can be computed in $O(n+m)$ time using algorithm~\ref{alg:source} based on the partition refinement implementation of algorithm~\ref{alg:pivot} described in section~\ref{sec:partitionRefinement}; and using that same implementation, algorithm~\ref{alg:LBFS} is known to run in $O(n+m)$ time as well.~\cite{habib:stacs}  The remaining initialization components are addressed below in turn.

\subsection{Slices} \label{sec:slices}

Explicitly computing the set of vertices in each slice would prove too costly.  Instead, they will be implicitly computed by constructing the tree defined by the containment relationship between slices (see remark~\ref{rem:sliceContainment}) as defined below:

\begin{definition}
Let $\mathcal{S}$ be the set of slices defined by some LBFS ordering of the graph $G$.  The corresponding \emph{slice-tree} is defined such that its interior nodes correspond to the elements of $\mathcal{S}$, the leaves correspond to the vertices in $G$, and the children of each slice are its maximal subslices.
\end{definition}

An example slice-tree is provided in figure~\ref{fig:sliceTree}.  The following are all easily verified properties of the slice-tree, each an extension of the containment relationship captured by remark~\ref{rem:sliceContainment}.

\begin{figure}  
\begin{center}  
\includegraphics[scale=0.85]{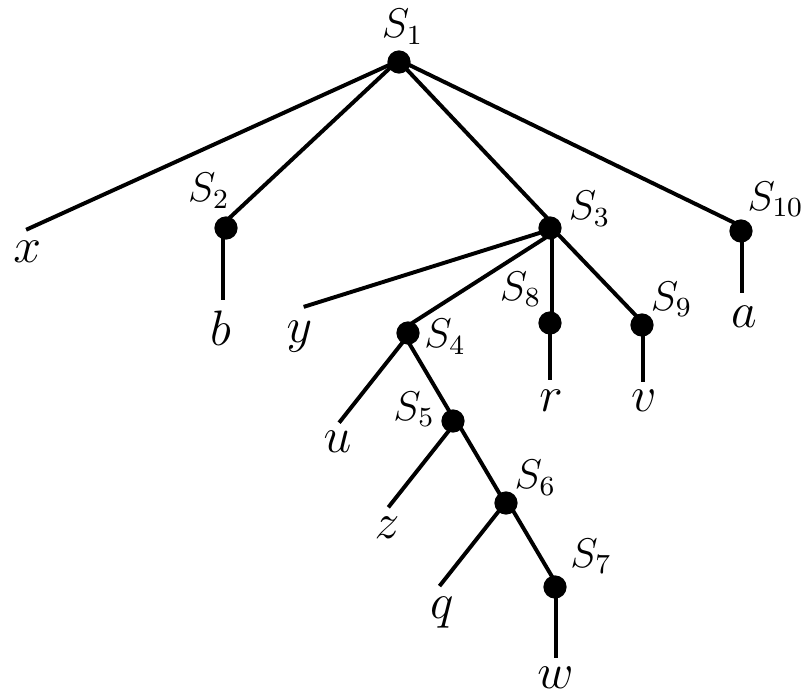}
\caption{A slice-tree for the graph in figure~\ref{fig:example}, in this case corresponding to the example execution of algorithm~\ref{alg:LBFS} described in section~\ref{sec:LBFS}.}
\label{fig:sliceTree}
\end{center}   
\end{figure}

\begin{remark} \label{rem:sliceTreeAncestor}
Let $\mathcal{S} = S_1,\ldots,S_n$ be the set of slices defined by some LBFS ordering and consider the corresponding slice-tree. 

\begin{enumerate}

\item If $S_i$ is an ancestor of $S_j$, then $i < j$;  

\item $S_i$ is an ancestor of $S_j$ if and only if $S_j \subset S_i$;

\item $y \in S_i$ if and only if $S_i$ is an ancestor of $y$.

\end{enumerate}

\end{remark}

Recall the labels maintained by vertices during algorithm~\ref{alg:LBFS}.  Notice that after each vertex is appended to the output ordering $\sigma$, its label remains fixed -- or final.  When discussing vertices in the context of the slice-tree, this final label is what is meant when referring to a vertex's label.  
Of course, slices are defined by labels, and these will be what is meant when referring to a slice's label in the slice-tree context.  In particular, the first slice consisting of all vertices is assigned the empty label.  Notice the correspondence between the labels defining slices and the labels assigned to their initial vertex.  The following are properties of the slice-tree in terms of these labels:

\begin{lemma} \label{lem:labels}
Consider the LBFS ordering $\sigma$ and its corresponding slice-tree.

\begin{enumerate}

\item If the slice $S$ is an ancestor of the vertex $x$, then $S$'s label is a (not necessarily proper) prefix of $x$'s label.

\item If $x$ and $y$ are consecutive vertices in $\sigma$, then the label of the slice being their least common ancestor is the longest (not necessarily proper) prefix shared by $x$'s label and $y$'s label.

\end{enumerate}

\end{lemma}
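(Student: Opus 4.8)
The plan is to unwind the definitions of labels and of slices from algorithm~\ref{alg:LBFS}, and then exploit the tight correspondence, already noted in the text, between a slice's label and the label of its initial vertex. For part~1, I would argue by induction on the depth of the slice $S$ in the slice-tree, or equivalently on the position of $S$ among $S_1,\ldots,S_n$. The base case is the root slice $V(G)$, whose label is empty and hence a prefix of every vertex's label. For the inductive step, suppose $S$ is an ancestor of $x$; by remark~\ref{rem:sliceTreeAncestor}, $x \in S$, so $x$ lies in one of the maximal subslices $S'$ of $S$ (possibly $S'$ is the initial vertex of $S$, in which case $x$ is that initial vertex). Each such maximal subslice $S'$ is itself a slice, and I claim $S$'s label is a prefix of $S'$'s label: indeed, when the initial vertex $z$ of $S$ is removed from $\sigma$ during the LBFS, every vertex of $S$ that is a neighbour of $z$ gets one more symbol appended, while non-neighbours do not, so the vertices of $S$ split into two groups by label, each of whose common prefix extends $S$'s label by exactly one symbol (or zero, in the degenerate case). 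Applying the inductive hypothesis to $S'$ (which is a strict descendant of $S$, hence appears later and is shorter to handle) and $x$ finishes this direction once one checks the initial-vertex convention separately.

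For part~2, the key observation is that when $x$ is selected and appended to $\sigma$, at that moment $x$ has lexicographically largest label; right after appending $x$, its neighbours among the not-yet-selected vertices each receive one more appended symbol, and then $y$ is the next one chosen, so $y$ has lexicographically largest label among the remaining vertices at that later iteration. I would first show that the longest common prefix $\pi$ of $x$'s (final) label and $y$'s (final) label is itself the label of some slice $T$, and that $T$ is an ancestor of both $x$ and $y$ (using part~1 in the form: if $\pi$ is a prefix of both labels, the slice with label $\pi$ contains both, hence is a common ancestor). Then I would argue $T$ is the \emph{least} common ancestor: if some slice $T'$ with $T \subset T' $... rather, if the least common ancestor $L$ had label strictly longer than $\pi$, then $x$ and $y$ would lie in the same maximal subslice of the parent of $L$ only if their labels agreed on that longer prefix, contradicting the definition of $\pi$; conversely if $L$'s label were a proper prefix of $\pi$, then $x$ and $y$ would lie in distinct maximal subslices of $L$, yet the vertices selected strictly between $x$ and $y$ in $\sigma$ — of which there are none — would have to include the initial vertices of the intervening slices, forcing some vertex to be selected between $x$ and $y$, a contradiction. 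This last point is where the consecutiveness of $x$ and $y$ in $\sigma$ is used essentially.

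The main obstacle I anticipate is the careful bookkeeping around the initial-vertex convention: the text declares that the initial vertex of a slice is itself counted as one of the maximal subslices (a leaf child), so when $x$ is the initial vertex of $S$ its label equals $S$'s label exactly and "prefix" is improper, and several of the clean "append exactly one symbol" statements degenerate. I would handle this by treating the initial vertex as a special case throughout, and by phrasing everything in terms of the symbols $|V(G)| - |\sigma| + 1$ actually appended by algorithm~\ref{alg:LBFS}, so that "label of slice $S$" is literally the common prefix of the labels of the vertices in $S$ at the iteration on which $S$ is defined. Given that reading, both parts reduce to the elementary fact that in a lexicographic labelling produced by monotone appends, the common-prefix structure of the final labels is exactly the tree of slices — a statement whose proof is a direct, if fiddly, induction along $\sigma$.
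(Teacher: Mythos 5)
Your argument for part~1 is sound and agrees in substance with the paper's (much shorter) proof: a vertex of $S$ carries $S$'s defining label at the iteration on which $S$ becomes a slice, and labels only ever grow by appending, so $S$'s label is a prefix of the vertex's final label. (The parenthetical claim that each maximal subslice's label extends its parent's by \emph{exactly one} symbol is not needed and is not true in general -- a later maximal subslice can accumulate several appended symbols from pivots in earlier subslices before its label becomes lexicographically largest -- but nothing in your induction actually depends on it.)

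Part~2 is where the proposal breaks down. Both of your key steps -- that the longest common prefix $\pi$ of the labels of $x$ and $y$ is itself the label of a slice containing both, and that a least common ancestor whose label is a proper prefix of $\pi$ would force some vertex to be selected strictly between $x$ and $y$ -- are asserted rather than proved, and the first is false. Consecutive vertices lie in two \emph{consecutive} maximal subslices of their least common ancestor $L$, so no ``intervening initial vertex'' contradiction is available; and those two subslices can agree on one or more label symbols beyond $L$'s label whenever the first pivot adjacent to one of them is also the first pivot adjacent to the other. Concretely, take $V=\{z,w_1,w_2,p,q\}$ with edges $zw_1, zw_2, w_1p, w_2p, w_1q, pq$ (this is $\overline{P_5}$, a prime comparability graph in which $z$ is a source). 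An LBFS from $z$ may produce $\sigma = z,w_1,w_2,p,q$ with final labels $z:(\,)$, $w_1:(5)$, $w_2:(5)$, $p:(4,3)$, $q:(4)$; the slices are $V,\{w_1,w_2\},\{w_2\},\{p\},\{q\}$, so the least common ancestor of the consecutive pair $p,q$ is the root $V$ with empty label, yet the longest common prefix of $(4,3)$ and $(4)$ is $(4)$. So the gap cannot be closed as stated: part~2 fails on this example (the paper's own one-sentence justification, ``it is the longest such prefix by definition of $S$ being their least common ancestor,'' glosses over exactly the same point), and the failure is not harmless, since algorithm~\ref{alg:sliceTree} searches the ancestors of the predecessor for one whose label equals $\pi$ and here finds none.
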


\begin{proof}
For item~1, notice that $x \in S$ by item~3 of remark~\ref{rem:sliceTreeAncestor}.  Thus, $x$ and $S$ shared the same label at the time that $S$ became a slice.  It follows that $S$'s label is a (not necessarily proper) prefix of $x$'s label.  For item~2, let $S$ be the least common ancestor of $x$ and $y$.  Then $S$'s label is a (not necessarily proper) prefix of $x$'s label and $y$'s label by item~1; it is the longest such prefix by definition of $S$ being their least common ancestor.
\end{proof}

Knowing lemma~\ref{lem:labels}, and remembering the correspondence between vertices and slices, it is now easy to compute the slice-tree as in algorithm~\ref{alg:sliceTree}.  The correctness of algorithm~\ref{alg:sliceTree} follows immediately from the relevant definitions, lemma~\ref{lem:labels}, and the fact that vertices are processed in order.  For its implementation, note that identifying $z$ is clearly a constant time operation.  Obviously $\rho$ can be computed in time on the order of the length of $y$'s label.  And once $S$ is identified, the updates to $T$ can be performed in constant time.      

\begin{algorithm} 

\BlankLine

\KwIn{An LBFS ordering $\sigma$ with initial vertex $x$.}

\KwOut{The slice-tree corresponding to the set of slices defined by $\sigma$.}

\BlankLine

\BlankLine

initialize $T$ as the rooted tree with a single interior node having $x$ as its first (and only) child\;
initialize the root of $T$ by assigning it an empty label\;

\BlankLine

\ForEach{$y \in \sigma - \{x\}$ (in order)} {

\BlankLine

let $z$ be the vertex preceding $y$ in $\sigma$\;
let $\rho$ be the longest (not necessarily proper) prefix shared by $y$'s label and $z$'s label\;
let $S$ be the ancestor of $z$ whose label is $\rho$\;

\BlankLine

update $T$ by creating a new node $u$ and making it the last child of $z$\;
update $T$ by making $y$ the first and only child of $u$\;

\BlankLine

}

\BlankLine

\KwRet{$T$}\;

\BlankLine

\caption{$SliceTree(\sigma)$} \label{alg:sliceTree}
\end{algorithm}

Now consider the problem of identifying $S$.  This can be accomplished by traversing the path from $z$ to the root of the slice-tree, comparing $\rho$ with the label of each slice thus encountered, stopping when a match is found.  Of course, the last character in those two labels will be the same.  And once those last two characters match, all the others must match as well because both are (not necessarily proper) prefixes of $z$'s label ($\rho$ by definition and the other by item~1 of lemma~\ref{lem:labels}).  Thus, each comparison is a constant-time operation.

The problem for the efficiency of algorithm~\ref{alg:sliceTree} is that several consecutive slices on the path from $z$ to the root of the slice-tree might share the same label.  If every label on the path was distinct, then identifying $S$ would be on the order of the length of $z$'s label.  This would mean the total cost of algorithm~\ref{alg:sliceTree} would be $O(n+m)$.  The simple solution is to maintain a copy of the path from $z$ to the root, excluding those nodes whose parents share their labels.  Updating this path is clearly a constant time operation after each new vertex $y$ is inserted.  The total cost therefore remains $O(n+m)$.

\subsection{Active Edges} \label{sec:activeEdges}

Active edges will drive the refinement that will later be used to compute the desired linear extension.  Given their importance, special notation is developed for them below:

\begin{notation}
Let $S_1,\ldots,S_n$ be the set of slices in order.  For each vertex $x \in S_i$, let $\alpha_i(x)$ be the set of vertices $y \in N(x)$ such that $xy$ is active for $S_i$.  If $x \notin S_i$, then $\alpha_i(x) = \emptyset$, by convention.
\end{notation}  

To illustrate the notion of active edges, recall the set of slices $S_1,\ldots,S_{10}$ defined earlier for the graph illustrated in figure~\ref{fig:example}.  The set of active edges for each of these slices is as follows:

\begin{itemize}

\item $S_1$: $xb, by, bu, bz, bq, bw, br, bv, za$;

\item $S_2$: $\emptyset$;

\item $S_3$: $yu,yz,yq,yw,ur,qv$;

\item $S_4$: $uz,uq,uw$;

\item $S_5$: $\emptyset$;

\item $S_6$: $qw$;

\item $S_7$: $\emptyset$;

\item $S_8$: $\emptyset$;

\item $S_9$: $\emptyset$;

\item $S_{10}$: $\emptyset$.

\end{itemize}

The slice-tree can be used to efficiently compute active edges in the manner outlined by algorithm~\ref{alg:activeEdges} below.  Its correctness is based on lemma~\ref{lem:activeEdges}.  Notice that algorithm~\ref{alg:activeEdges} assigns labels to slices.  These are not to be interpreted as the labels assigned during algorithm~\ref{alg:LBFS}.  Furthermore, the labels referenced in lemma~\ref{lem:activeEdges} and corollary~\ref{cor:activeEdges} afterward refer to those defined by algorithm~\ref{alg:activeEdges}, not algorithm~\ref{alg:LBFS} which was the custom in the previous section.

\begin{algorithm} 

\BlankLine

\KwIn{A slice-tree $T$ corresponding to the set of slices $\mathcal{S} = S_1,\ldots,S_n$ defined by some LBFS ordering.}

\KwOut{The slice-tree $T$ augmented so that every slice has associated with it a graph induced by that slice's active edges.}

\BlankLine
\BlankLine

initialize every slice $S_i$ by assigning it an empty label\;
initialize every slice $S_i$ by assigning it an empty graph referenced by $G(S_i)$\;

\BlankLine

\ForEach{$S_i$ (in order)} {

\BlankLine

let $y$ be the initial vertex for $S_i$\;

\lForEach{$S_j, j > i,$ that is maximal with respect to $S_j \subseteq N(y)$} { add $y$ to $S_j$'s label}

\BlankLine

}

\BlankLine

\ForEach{$S_j$ with non-empty label} {

\BlankLine
let $S_i$ be the parent of $S_j$\;

\ForEach{pair $y,z$ where $y \in S_j$ and $z$ appears in $S_j$'s label} {

update $G(S_i)$ with the edge $yz$\;}

}

\BlankLine

\KwRet{$T$}\;

\BlankLine

\caption{$ActiveEdges(T)$} \label{alg:activeEdges}
\end{algorithm}

\begin{lemma} \label{lem:activeEdges}
Let $S$ be a slice and $x,S_1,\ldots,S_k$ its maximal subslices in order.  If $w$ is in the label for $S_i$, then $w \in S - S_i$ and $S_i \subseteq N(w)$.  Conversely, for any vertex $y \in S_i$: 

\begin{enumerate}

\item If $x \in N(y)$, then $x$ appears in the label of $S_i$; 

\item If there is a $z \in N(y) \cap S_j, j < i$, then $z$ appears in the label of $S_i$.

\end{enumerate}

\end{lemma}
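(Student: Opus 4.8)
The plan is to verify the claim about \texttt{ActiveEdges} by directly translating the labelling rule of algorithm~\ref{alg:activeEdges} into statements about slices and then invoking remark~\ref{rem:maximalSlice}. First I would unpack precisely when a vertex $w$ ends up in the label of $S_i$: by the first loop, this happens exactly when $w$ is the initial vertex of some slice $S'$ processed earlier (so $S'$ comes before $S_i$ in the slice order), and $S_i$ is a \emph{maximal} slice with $S_i \subseteq N(w)$. Since $S_i \subseteq N(w)$ immediately gives $S_i \subseteq N(w)$, the only real content in the forward direction is showing $w \in S - S_i$, i.e. $w \in S$ and $w \notin S_i$. The fact that $w \notin S_i$ is automatic: $w$ is universal to $S_i$, so $w \notin N(w)$ rules out $w \in S_i$ (a slice with more than one vertex has internal non-adjacencies only if it is not a clique, but more simply $w \notin N(w)$ and $S_i \subseteq N(w)$ already forces $w \notin S_i$). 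For $w \in S$, I would argue that $w$ is the initial vertex of a slice $S'$ with $S' \supseteq S_i$ or $S'$ appearing before $S_i$; using remark~\ref{rem:sliceTreeAncestor} and the ordering of slices, together with the maximality condition forcing $S_i$ to be a maximal subslice of exactly the slice whose initial vertex labels it, one shows the slice that $w$ initiates must be an ancestor of $S_i$ that is itself a descendant of (or equal to) $S$ — hence $w \in S$. The cleanest route is: the maximality of $S_i$ relative to $S_i \subseteq N(w)$ means $S_i$ is a maximal subslice of its parent $S$, and $w$'s slice $S'$ is the first slice on the path from $S$ downward that contains $S_i$ and is contained in $N(w)$'s closure appropriately; I expect this bookkeeping with ancestors to be the main obstacle, and it will lean heavily on item~3 of remark~\ref{rem:sliceTreeAncestor} ($w \in S_{\ell}$ for some $\ell < i$, i.e. $w$ lies in a sibling slice of $S_i$ coming earlier, or in $\{x\}$).

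For the converse, I would take $y \in S_i$ and handle the two cases. In case~1, $x \in N(y)$: by item~1 of remark~\ref{rem:maximalSlice}, $N(x) \cap S$ is either $S_1$ or $\emptyset$; since $y \in N(x) \cap S$ it is nonempty, so $N(x) \cap S = S_1$, forcing $i = 1$ and $S_i = S_1 \subseteq N(x)$. Now $x$ is the initial vertex of $S$ itself (the slice whose maximal subslices are $x, S_1, \ldots, S_k$), and $S$ is processed before $S_1$ in the slice order (ancestor comes first, remark~\ref{rem:sliceTreeAncestor} item~1). Since $S_1 \subseteq N(x)$ and $S_1$ is a maximal subslice, the maximality condition in the first loop of algorithm~\ref{alg:activeEdges} is met — any slice strictly between $S_1$ and some larger slice contained in $N(x)$ would have to be an ancestor of $S_1$ inside $S$, but the only such ancestor is $S$ itself, which contains $x \notin N(x)$ — so $x$ is added to $S_1 = S_i$'s label. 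In case~2, $z \in N(y) \cap S_j$ with $j < i$: by item~3 of remark~\ref{rem:maximalSlice} applied with indices $j < i$ (careful with the $1 < i$ hypothesis — if $j = 1$ I would instead use item~1 or note $S_1 \subseteq N(z')$ for the appropriate earlier vertex), there is a vertex $q \in S_{\ell}, \ell < j$, hmm — actually the direct approach is: $z \in S_j$ and $z$ is adjacent to $y \in S_i$ with $j < i$; by item~2 of remark~\ref{rem:maximalSlice}, $z$ is universal to $S_i$. So $S_i \subseteq N(z)$. Letting $S'$ be the slice whose initial vertex is $z$ (so $S'$ is a slice with $z \in S'$, and in fact $z$ being $S_j$'s representative... no — $z$ need not be $S_j$'s initial vertex). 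Here I need to be more careful: the label of $S_i$ receives the initial vertices of earlier slices universal to it and maximal. So I should instead argue that \emph{some} vertex accounting for $z$'s adjacency gets recorded, and then reconcile with the second loop of the algorithm which only adds edges $yz$ for $z$ literally appearing in the label.

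The likely resolution, and what I would check carefully, is that the statement's "$z$ appears in the label of $S_i$" should be read as: the initial vertex of the slice $S_j'$ that is the child of $S$ containing $z$ appears in $S_i$'s label when that child is universal to $S_i$ — and indeed by item~2 of remark~\ref{rem:maximalSlice}, if one vertex of $S_j$ is adjacent to $S_i$ then $S_j$'s initial vertex is too (each vertex of $S_j$ is universal-or-isolated to $S_i$, and remark~\ref{rem:maximalSlice} item~3 with the fact that $S_i, S_j$ are siblings forces... ), so universality propagates to the whole of $S_j$, making $S_j \subseteq N($that initial vertex$)$... no, that is backwards. Let me state the plan honestly: I would first prove the clean sub-claim that if \emph{some} $z \in S_j$ is adjacent to $y \in S_i$ ($j < i$, siblings under $S$), then \emph{every} $z' \in S_j$ is adjacent to every $y' \in S_i$, i.e. there is a join between $S_j$ and $S_i$ — this follows from remark~\ref{rem:maximalSlice} items~2 and~3 by a short argument (item~2 gives each $z' \in S_j$ is universal or isolated to $S_i$; pick one $z'$ universal; item~3 or item~2 applied symmetrically then forces all of them). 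Given this join, the initial vertex of $S_j$ is universal to $S_i$; and since $S_j$'s initial vertex $v_j$ is processed before $S_i$ and $S_i \subseteq N(v_j)$, maximality is satisfied (the only slices between $S_i$ and a larger $N(v_j)$-subset are ancestors of $S_i$ within $S$, the largest being $S$ which contains... other children of $S$, which need not be in $N(v_j)$), so $v_j \in$ label of $S_i$; then the second loop adds the edge $y v_j$ — and since $z$ and $v_j$ are symmetric here ($z$ is adjacent to $y$ iff $v_j$ is, by the join), the statement holds for $z$ as well under the reading that edges $yz$ for all $z \in S_j$ get added via $v_j$'s membership combined with the join. I expect reconciling the precise wording "$z$ appears in the label" with what the algorithm literally records to be where most care is needed; the underlying graph-theoretic facts all come straight from remark~\ref{rem:maximalSlice}.
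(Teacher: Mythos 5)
Your forward direction and converse item~1 track the paper's argument closely enough. The genuine gap is in converse item~2, where you talk yourself into believing the lemma needs to be reinterpreted; it does not, and the sub-claim you propose as a repair is false. The point you are missing is that \emph{every} vertex of $G$ is the initial vertex of exactly one slice (there are $n$ slices, one per iteration of algorithm~\ref{alg:LBFS}), so the first loop of algorithm~\ref{alg:activeEdges} iterates over every vertex of the graph, not merely over the initial vertices of the maximal subslices of $S$. Hence $z$ itself gets a turn as the labelling vertex: it is the initial vertex of some slice contained in $S_j$, and since every vertex of $S_j$ is explored before every vertex of $S_i$, that slice precedes $S_i$ in the global slice order, so the algorithm's requirement that the labelled slice come later than the labeller's slice is met. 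Item~2 of remark~\ref{rem:maximalSlice} gives $S_i \subseteq N(z)$ (a vertex of an earlier maximal subslice that is adjacent to one vertex of $S_i$ is universal to $S_i$), and every ancestor of $S_i$ contains $z$ and therefore cannot be contained in $N(z)$, so $S_i$ is maximal with respect to $S_i \subseteq N(z)$ and $z$ is literally added to $S_i$'s label, exactly as the lemma states.

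The ``join'' sub-claim you propose instead --- that if some $z \in S_j$ is adjacent to some $y \in S_i$ with $j < i$, then there is a join between $S_j$ and $S_i$ --- is refuted by the paper's own running example: among the maximal subslices of the first slice, the vertex $z$ in the subslice $\{y,u,v,r,q,z,w\}$ is adjacent to $a$ in the later subslice $\{a\}$, yet $y$ is not adjacent to $a$ (the only active edge incident to $a$ for that slice is $za$). Item~2 of remark~\ref{rem:maximalSlice} is a per-vertex universal-or-isolated statement about vertices of the earlier subslice; universality does not propagate from one vertex of $S_j$ to the rest of it. So the detour through the initial vertex of $S_j$ both fails and is unnecessary.
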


\begin{proof}
It is clear from algorithm~\ref{alg:activeEdges} that $S_i \subseteq N(w)$.  It follows that $w \notin S_i$.  Notice as well from algorithm~\ref{alg:activeEdges} that there is no ancestor $S'$ of $S_i$ such that $S' \subseteq N(w)$.  Therefore, by either item~1 or item~2 of remark~\ref{rem:maximalSlice}, it must happen that $w \in S$.  

Now consider some $y \in S_i$ and an ancestor $S'$ of $S_i$.  Therefore $x \in S'$, and so $S' \not \subseteq N(x)$.  Hence, $S'$ is not a maximal slice such that $S' \subseteq N(x)$.  If $x \in N(y)$, then $N(x) \cap S \ne \emptyset$, and therefore $N(x) \cap S = S_1$, by item 1 of remark~\ref{rem:maximalSlice}.  So in that case, $i = 1$ and $S_i \subseteq N(x)$.  It follows that $x$ will be added to $S_i$'s label.  

Finally, consider some $z \in N(y) \cap S_j, j < i$.  As with $x$, no ancestor $S'$ of $S_i$ can be a maximal slice such that $S' \subseteq N(z)$, since $z \in S'$.  But $S_i \subseteq N(z)$ by item 2 of remark~\ref{rem:maximalSlice}.  It follows that $z$ will be added to $S_i$'s label.   
\end{proof}

\begin{lemma}
Algorithm~\ref{alg:activeEdges} is correct.  
\end{lemma}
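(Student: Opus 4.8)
The plan is to prove the only sensible reading of ``correct'' here, namely that for every slice $S$ the graph $G(S)$ constructed by algorithm~\ref{alg:activeEdges} is exactly the graph induced by the edges active for $S$ (so that, taken together, the $G(S)$ partition $E(G)$ by remark~\ref{rem:activeOnce}). Fix a slice $S$ with maximal subslices $x,S_1,\ldots,S_k$ in order, so that the children of $S$ in the slice-tree are the leaf $x$ together with the slices $S_1,\ldots,S_k$. Recalling the definition, an edge $yz\in E(G)$ is active for $S$ exactly when $y$ and $z$ lie in distinct maximal subslices of $S$. I would prove the two set inclusions separately, each time reading off the needed fact from one half of lemma~\ref{lem:activeEdges}, which was stated for precisely this situation; note that the labels appearing in that lemma are the labels maintained by algorithm~\ref{alg:activeEdges}, so the lemma already absorbs the effect of the algorithm's first loop.

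For \emph{soundness}, observe that every edge inserted into $G(S)$ is inserted by the second loop while processing a child $S_j$ of $S$ with non-empty label, and has the form $yz$ with $y\in S_j$ and $z$ in the label of $S_j$. The first half of lemma~\ref{lem:activeEdges} then gives $z\in S-S_j$ and $S_j\subseteq N(z)$; hence $yz\in E(G)$, and since $z$ lies in $S$ but not in the maximal subslice $S_j$ containing $y$, the endpoints occupy different maximal subslices of $S$, so $yz$ is active for $S$.

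For \emph{completeness}, let $yz$ be active for $S$. Then $yz\in E(G)$ and the endpoints lie in different maximal subslices, so exactly one of two cases occurs. If one endpoint, say $z$, is the initial vertex $x$ and $y\in S_i$, then $x\in N(y)$, so item~1 of lemma~\ref{lem:activeEdges} puts $x$ into the label of $S_i$; the second loop, with $S$ the parent of $S_i$, then adds the pair $(y,x)$, i.e.\ the edge $yz$, to $G(S)$. Otherwise $y\in S_i$ and $z\in S_j$ with $1\le i<j$; applying item~2 of lemma~\ref{lem:activeEdges} with $z$ in the role of the base vertex and $y$ as its neighbour lying in the earlier subslice $S_i$ puts $y$ into the label of $S_j$, and the second loop adds $(z,y)=yz$ to $G(S)$. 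Since $S$ was arbitrary, both inclusions hold for every slice.

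The argument is essentially an unwinding of lemma~\ref{lem:activeEdges}, and the only thing I expect to require care is the index bookkeeping in the completeness step: item~2 of that lemma is phrased from the viewpoint of a vertex one of whose neighbours lies in an \emph{earlier} subslice, so for an active edge with $y\in S_i$, $z\in S_j$, $i<j$, it must be invoked with $z$ as the lemma's vertex and $y$ as the earlier-subslice neighbour, not the other way round. It is also worth recording explicitly that the case split (an endpoint equal to the initial vertex versus both endpoints in proper maximal subslices) is exhaustive, which holds because two distinct endpoints cannot both be the single vertex $x$.
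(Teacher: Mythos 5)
Your proof is correct and follows essentially the same route as the paper's: both directions are read off directly from lemma~\ref{lem:activeEdges}, with soundness coming from its first half and completeness from its two itemized cases (the initial-vertex case and the case of two proper maximal subslices). Your extra care about which endpoint plays the role of the lemma's base vertex in item~2, and about the exhaustiveness of the case split, is sound and only makes explicit what the paper leaves implicit.
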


\begin{proof}
Let $S$ be a slice and $x,S_1,\ldots,S_k$ its maximal subslices in order.  Consider an active edge $xw$.  Then by item~1 of remark~\ref{rem:maximalSlice}, $w \in S_1$.  Furthermore, by item~1 of lemma~\ref{lem:activeEdges}, $x$ appears in the label of $S_1$.  Clearly the edge $xw$ is added to $G(S)$ in this case.  The same obviously holds for the active edge $yz$ where $y \in S_i$ and $z \in S_j, i < j$, although this time by item~2 of lemma~\ref{lem:activeEdges}.  

Now consider an edge $uv$ in $G(S)$.  According to algorithm~\ref{alg:activeEdges}, it can be assumed without loss of generality that $u \in S_i$ and $v$ appears in $S_i$'s label.  Then by lemma~\ref{lem:activeEdges}, $v \in S - S_i$ and $S_i \subseteq N(v)$.  It follows that $uv$ is an active edge for $S$.
\end{proof}

Consider the slices with empty labels following algorithm~\ref{alg:activeEdges}.  Let $S$ be a slice, $x,S_1,\ldots,S_k$ its maximal subslices in order, and assume that $S_i$ has an empty label.  Then by lemma~\ref{lem:activeEdges}, the initial vertex $x$ is isolated from $S_i$ as are all other vertices $y \in S_j, j < i$.  But what about vertices $z \in S_{\ell}, \ell > i$?  In fact, no such vertex can exist, by item~3 of remark~\ref{rem:maximalSlice}.  This leads to the following corollary, which characterizes the relationship between active edges and the labels assigned by algorithm~\ref{alg:activeEdges}.

\begin{corollary} \label{cor:activeEdges}
Consider a slice $S_i$ and one of its maximal subslices $S_j$.  If $w$ is in the label for $S_j$, then for every vertex $y \in S_j$, $w \in \alpha_i(y)$.  Moreover, if $S_j$'s label is empty, then for every vertex $y \in S_j, \alpha_i(y) = \emptyset$.
\end{corollary}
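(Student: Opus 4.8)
The plan is to prove the two assertions separately: the first is essentially a restatement of the forward direction of Lemma~\ref{lem:activeEdges}, while the second carries the real content and rests on Remark~\ref{rem:maximalSlice}. Throughout I would write $S$ for the slice $S_i$ of the statement and $x,S_1,\ldots,S_k$ for its maximal subslices in order, so that the maximal subslice called $S_j$ in the statement is some $S_p$ with $1\le p\le k$ (the initial vertex $x$ counts as a maximal subslice only by convention and carries no label, so the case where that subslice is $x$ is vacuous).

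For the first assertion, suppose $w$ lies in the label of $S_p$. By Lemma~\ref{lem:activeEdges} this gives $w\in S-S_p$ and $S_p\subseteq N(w)$. Since the maximal subslices of $S$ partition $S$ and $w$ lies in $S$ but not $S_p$, the vertex $w$ sits in some maximal subslice of $S$ other than $S_p$; hence for every $y\in S_p$ the pair $wy$ is an edge of $G$ (because $S_p\subseteq N(w)$) whose endpoints lie in different maximal subslices of $S$, so $wy$ is active for $S$ and $w\in\alpha_i(y)$.

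For the second assertion I would argue that if $S_p$ has the empty label then no edge incident to any $y\in S_p$ is active for $S$. An edge active for $S$ has both endpoints in maximal subslices of $S$ and hence both in $S$, so it suffices to exclude a neighbour of $y$ lying in a maximal subslice of $S$ other than $S_p$. The converse part of Lemma~\ref{lem:activeEdges} does most of this: if the initial vertex $x$ were adjacent to a vertex of $S_p$ it would appear in the label of $S_p$, and if a vertex of $S_q$ with $q<p$ were adjacent to a vertex of $S_p$ it too would appear there; emptiness of the label therefore shows that $x$ and all of $S_1,\ldots,S_{p-1}$ are isolated from $S_p$. It then remains only to show that $S_p$ is the \emph{last} maximal subslice of $S$, for in that case every maximal subslice other than $S_p$ is one of $x,S_1,\ldots,S_{p-1}$, all isolated from $S_p$, and so $\alpha_i(y)=\emptyset$.

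To show $S_p$ is last I would split on $p$. When $p\ge 2$: if some $S_m$ with $m>p$ existed, item~3 of Remark~\ref{rem:maximalSlice} applied to $S_p,S_m$ would produce a vertex in some $S_\ell$, $\ell<p$, universal to the non-empty set $S_p$, contradicting the isolation just established. The residual case $p=1$ is the one I expect to require extra care, since item~3 is unavailable: here emptiness of $S_1$'s label forces $x$ to be isolated from $S_1$, so by item~1 of Remark~\ref{rem:maximalSlice} (which says $N(x)\cap S$ is $S_1$ or $\emptyset$) we get $N(x)\cap S=\emptyset$; but then, when $S$ is processed by Algorithm~\ref{alg:LBFS}, its initial vertex labels no vertex of $S$, so $S-\{x\}$ remains one slice and is the unique proper maximal subslice, i.e.\ $k=1$ and no $S_2$ exists. \textbf{The main obstacle} is precisely this $p=1$ sub-case: it forces one to step briefly outside the combinatorial facts of Remark~\ref{rem:maximalSlice} and reason about how slices are generated by the search; everything else is a direct reading of Lemma~\ref{lem:activeEdges} together with Remark~\ref{rem:maximalSlice}.
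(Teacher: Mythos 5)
Your proof is correct and follows essentially the same route as the paper: the first assertion is read off from Lemma~\ref{lem:activeEdges}, and for the second the paper likewise argues (in the paragraph preceding the corollary) that an empty label forces the initial vertex and all earlier maximal subslices to be isolated from $S_j$, and that no later maximal subslice can exist by item~3 of Remark~\ref{rem:maximalSlice}. Your separate treatment of the case where $S_j$ is the first proper maximal subslice is a welcome extra precaution, since item~3 of that remark only applies to subslices with index greater than~$1$ and the paper passes over this case silently.
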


\begin{proof}
Follows directly from lemma~\ref{lem:activeEdges} and the relevant definitions.
\end{proof}

In other words, the vertices of each slice without incident active edges are those in slices not having labels at the end of algorithm~\ref{alg:activeEdges}.  Such vertices and their containing slices are further characterized below.  Understanding them is crucial for establishing the running-time of the transitive orientation algorithm developed in this paper.

\begin{definition}
Let $S_1,\ldots,S_n$ be the set of slices in order.  A vertex $x \in S_i$ is said to be \emph{disconnected} for $S_i$ if $\alpha_i(x) = \emptyset$ and is \emph{connected} for $S_i$ otherwise.
\end{definition}

\begin{remark} \label{rem:disconnected}
Let $S$ be a slice and $x,S_1,\ldots,S_k$ its maximal subslices in order.  Then $G[S]$ is disconnected if and only if there exists a disconnected vertex for $S$.  Moreover, when there is a vertex $y$ that is disconnected for $S$, it happens that $y \in S_k$, and in this case, all vertices in $S_k$ are disconnected for $S$.  Furthermore, when $S$ is disconnected, one of the components of $G[S]$ is $\{x\} \cup S_1 \cup \cdots \cup S_{k-1}$, the others being those of $G[S_k]$.
\end{remark}

Remark~\ref{rem:disconnected} will be used below to guarantee that no disconnected vertices are processed during algorithm~\ref{alg:activeEdges}.  This will be important to guarantee the linear running time of algorithm~\ref{alg:activeEdges} above.   

There are two blocks of nested for-loops in algorithm~\ref{alg:activeEdges}.  The case with disconnected vertices occurs for the second one, which will be considered below.  For the first block of nested for-loops, notice that a simple bottom-up marking scheme can compute the necessary $S_j$'s for each vertex $y$: start by marking all leaves in $N(y)$, and then recursively mark each node whose children are all marked.  The time this takes is on the order of the size of the subtree rooted at each $S_j$.  

Now, every slice contains at least one vertex: its initial vertex.  Hence, if a slice contains only one vertex, then all of its ancestors must contain at least two vertices.  Thus, the size of the subtree rooted at each $S_j$ is on the order of the size of the number of its descendent leaves, or equivalently, the size of $S_j$.  But the $S_j$'s are all disjoint, by definition, and each $S_j \subseteq N(y)$.  It follows that the running time of the first block of nested for-loops is $O(n+m)$.

Having finished the first block of nested for-loops, it can be assumed that the labels on each slice are in place.  To implement the second block, it will be necessary to compute the set of vertices in each slice whose label is non-empty.  This can be done recursively as follows.  Let $S$ be a slice and $x,S_1,\ldots,S_k$ its maximal subslices in order.  Assume that the set of vertices in each $S_i$ has been computed and is represented as a linked-list.  For each $S_i$ whose label is non-empty, make a copy of the list.

For the recursion to work, a list of all vertices in $S$ must now be computed.  Observe that remark~\ref{rem:disconnected} together with corollary~\ref{cor:activeEdges} implies that there is at most one $S_i$ whose label is empty, namely $S_k$.  If there is such an $S_i$, take its recursively computed list of vertices and append to it the just computed copies of the lists from all other maximal subslices (including a copy of $x$).  This produces the desired list of vertices for $S$.  Every vertex in a slice with a label has at least one incident active edge, by corollary~\ref{cor:activeEdges}.  Hence, the cost of creating the list of vertices for $S$ is proportional to the total number of active edges for $S$.  But by remark~\ref{rem:activeOnce}, each edge is active for exactly one slice.  Therefore the total cost of creating these lists for all slices is $O(n+m)$.  

Finally, observe that once the set of labels has been computed for each slice, and the set of vertices has been computed for each slice with a non-empty label, the graph $G(S)$ can clearly be computed in time on the order of the number of active edges for $S$.  As above, the cost of doing so for all slices is $O(n+m)$.

Before moving on, it will be necessary to introduce the following results concerning active edges.  They will be needed later to prove the correctness of the refinement that produces the desired linear extension.  Each either rephrases or extends remark~\ref{rem:maximalSlice}.

\begin{remark} \label{rem:neighbour}
Let $S_1,\ldots,S_n$ be the set of slices in order.  Consider some slice $S_i$ with initial vertex $x$.  If $N(x) \cap S_i \ne \emptyset$, then $N(x) \cap S_i = S_{i+1} = \alpha_i(x)$.
\end{remark}  

\begin{lemma} \label{lem:universalSlice}
Let $S$ be a slice and $x,S_1,\ldots,S_k$ its maximal subslices in order.  Assume that $N(x) \cap S \ne \emptyset$.  Then for every $S_i, i > 1$, that contains a connected vertex for $S$, there exists a $y \in S_{\ell}, \ell < i$, that is universal to $S_i$. 
\end{lemma}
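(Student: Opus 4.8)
The plan is to fix an arbitrary maximal subslice $S_i$ of $S$ with $i>1$ that contains a vertex $v$ which is connected for $S$, unwind the definition of ``connected'' to produce a concrete edge leaving $S_i$, and then run that edge through the three items of remark~\ref{rem:maximalSlice} to land on a universal vertex in an earlier subslice.

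First I would recall that, by definition, $v$ being connected for $S$ means $v$ has a neighbour $w$ that lies in a maximal subslice of $S$ distinct from the one containing $v$, namely distinct from $S_i$. The first thing to check is that $w$ is not the initial vertex $x$: by item~1 of remark~\ref{rem:maximalSlice}, $N(x)\cap S$ is either $S_1$ or empty, so since $i>1$ the subslice $S_i$ is disjoint from $N(x)$; as $w$ is adjacent to $v\in S_i$, this forces $w\ne x$. Hence $w\in S_j$ for some $j$ with $1\le j\le k$ and $j\ne i$. Now the argument splits on the position of $j$ relative to $i$. If $j<i$, then item~2 of remark~\ref{rem:maximalSlice} says $w$ is either universal to or isolated from $S_i$; since $w$ is adjacent to $v\in S_i$ it cannot be isolated, so it is universal to $S_i$, and $w$ (with the index $j<i$) is exactly the vertex we want. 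If instead $j>i$, then $1<i<j$, so item~3 of remark~\ref{rem:maximalSlice} applied to the pair $S_i,S_j$ hands us directly a vertex $y\in S_\ell$ with $\ell<i$ that is universal to $S_i$ (we simply discard the ``isolated from $S_j$'' half of item~3). In both cases the required $y$ is produced.

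The only real care needed is bookkeeping rather than any genuine obstacle: one must keep the lemma's \emph{local} indexing of the maximal subslices $x,S_1,\dots,S_k$ of $S$ separate from the \emph{global} slice indexing that appears in the definition of ``connected'' and ``disconnected'', and one must be sure the case $w=x$ is genuinely excluded so that the external neighbour of a connected vertex always lands in one of the $S_j$ with $1\le j\le k$ (which is where items~2 and~3 have traction). Once that is nailed down, the statement is essentially a repackaging of remark~\ref{rem:maximalSlice}, much in the spirit of the proof of lemma~\ref{lem:sameDirection}.
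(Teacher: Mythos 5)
Your proof is correct, but it organizes the argument differently from the paper. The paper splits on whether $S_i$ is the last maximal subslice: for $i<k$ it invokes item~3 of remark~\ref{rem:maximalSlice} directly (never touching the connected-vertex hypothesis), and for $i=k$, where item~3 has no $S_j$ with $j>i$ to lean on, it argues by contradiction --- if no earlier vertex were universal to $S_k$, item~2 would force all of $S_k$ to be disconnected for $S$, contradicting remark~\ref{rem:disconnected}. You instead split on where the witness neighbour $w$ of the connected vertex lands: for $j<i$ item~2 promotes $w$ itself to a universal vertex, and for $j>i$ item~3 supplies one. Your decomposition is direct rather than by contradiction, uses the connectedness hypothesis uniformly in both branches, and bypasses remark~\ref{rem:disconnected} entirely; the paper's version isolates the fact that the conclusion holds unconditionally whenever $i<k$, with connectedness only needed to rescue the boundary case $i=k$. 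Your handling of the $w=x$ possibility via item~1 and $i>1$ is the right bookkeeping and is indeed the only point where care is required; note it also shows the hypothesis $N(x)\cap S\ne\emptyset$ is not actually load-bearing in your argument. Both proofs are sound.
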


\begin{proof}
If $i \ne k$, then the result follows by item~3 of remark~\ref{rem:maximalSlice}.  So assume that $i = k$.  Suppose for contradiction that no such vertex $y$ exists.  Then there is no vertex in $S - S_k$ that is adjacent to any vertex in $S_k$, by item~2 of remark~\ref{rem:maximalSlice}.  Therefore $S_i = S_k$ would contain a disconnected vertex, which contradicts remark~\ref{rem:disconnected}.
\end{proof}

\begin{lemma} \label{lem:activeEdgePartition}
Let $\mathcal{S} = S_1,\ldots,S_n$ be the set of slices defined by some LBFS ordering.  If $y$ is the initial vertex for $S_i$, then $\alpha_j(y) = \emptyset$ for all $j > i$. 
\end{lemma}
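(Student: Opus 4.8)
The plan is to reduce the statement to a purely structural fact about the slice-tree: that the initial vertex $y$ of $S_i$ simply does not belong to any later slice $S_j$. Recall that the notation $\alpha$ is defined with the convention $\alpha_j(y) = \emptyset$ whenever $y \notin S_j$. So it suffices to prove that $j > i$ implies $y \notin S_j$, and everything else is bookkeeping with the conventions already set up for the slice-tree.

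First I would locate $y$ in the slice-tree. Since $y$ is the initial vertex of $S_i$, the convention that the initial vertex of a slice is counted among its maximal subslices makes $y$ a child of $S_i$ in the slice-tree. Hence, viewing $y$ as a leaf, its ancestor set is exactly $\{S_i\}$ together with the ancestors of $S_i$. By item~1 of remark~\ref{rem:sliceTreeAncestor}, every proper ancestor $S_\ell$ of $S_i$ satisfies $\ell < i$, and $S_i$ itself of course has index $i$; so every ancestor of the leaf $y$ carries index at most $i$.

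Next I would invoke item~3 of remark~\ref{rem:sliceTreeAncestor}: $y \in S_j$ if and only if $S_j$ is an ancestor of $y$. Combined with the previous step, $y \in S_j$ forces $j \le i$, and taking the contrapositive gives $y \notin S_j$ for every $j > i$. Then $\alpha_j(y) = \emptyset$ by the stated convention, which is the claim.

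I do not expect a real obstacle here; the only thing that needs a moment of care is to be sure $y$ appears in the slice-tree \emph{only} as the child of $S_i$ — it cannot be the initial vertex of any other slice (each iteration of the LBFS while-loop contributes a distinct initial vertex) and, as a leaf, it has a unique parent — so that the ancestor set of $y$ is genuinely $\{S_i\}$ plus the ancestors of $S_i$, with no stray branches to account for. A more pedestrian alternative would be to argue directly from the LBFS ordering $\sigma$ (the vertex chosen at iteration $i$ is absent from every slice associated with a later iteration), but the slice-tree route is cleaner and relies only on remark~\ref{rem:sliceTreeAncestor}, which is already in place.
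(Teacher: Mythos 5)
Your proposal is correct and follows essentially the same route as the paper's proof: both reduce the claim to the observation that the only slices containing $y$ are $S_i$ and its ancestors in the slice-tree, which by item~1 of remark~\ref{rem:sliceTreeAncestor} all have index at most $i$, and then invoke the convention that $\alpha_j(y)=\emptyset$ when $y\notin S_j$. The extra care you take about $y$'s unique position in the slice-tree is sound but not needed beyond what the paper already states.
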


\begin{proof}
Since $y$ is the initial vertex of $S_i$, if $y \in S_j$, then either $i = j$ or $S_j$ is an ancestor of $S_i$.  Therefore, $j \le i$, by item~1 of remark~\ref{rem:sliceTreeAncestor}.  So by definition, $\alpha_j(y) = \emptyset$ for all $j > i$.  
\end{proof}

\subsection{Co-Components} \label{sec:coComponents}

Section~\ref{sec:overview} hinted at the need to compute the co-components of the graph induced by each slice.  To see why it is useful, consider a slice $S$ and its maximal subslices $x,S_1,\ldots,S_k$ in order.  Recall that if $N(x) \cap S \ne \emptyset$, then $N(x) \cap S = S_1$, by item~1 of remark~\ref{rem:maximalSlice}.  Let $C$ be a co-component of $G[S_1]$.  Knowing the direction of an active edge incident to any vertex in $C$ is equivalent to knowing the direction of the active edges incident to all vertices in $C$ as follows:

\begin{lemma} \label{lem:coSameDir}
Consider an LBFS ordering of a comparability graph $G$.  Let $S$ be one of the slices and $x,S_1,\ldots,S_k$ its maximal subslices in order.  Assume that $N(x) \cap S \ne \emptyset$.  Let $C$ be a co-component of $G[S_1]$.  Then in any transitive orientation, all edges between $C$ and the set $S - C$ are directed the same way.
\end{lemma}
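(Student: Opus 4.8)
The plan is to prove the statement in the form that is actually needed: for every $z \in S - C$, all edges joining $z$ to $C$ are directed the same way, and moreover when $z \in \{x\} \cup S_2 \cup \cdots \cup S_k$ these edges are coherently oriented (so in particular all active edges incident to $C$ point the same way). The only tool is the elementary \emph{forcing} fact: in any transitive orientation, if $uv \notin E(G)$ but some vertex $w$ is adjacent to both $u$ and $v$, then the edges $uw$ and $vw$ point the same way at $w$, since a configuration $u \to w \to v$ (or $v \to w \to u$) would force the nonexistent edge $uv$.

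First I would unpack the hypotheses. Because $N(x) \cap S \neq \emptyset$, item~1 of remark~\ref{rem:maximalSlice} gives $N(x) \cap S = S_1$; hence $x$ is universal to $S_1$ (in particular to $C$), whereas every vertex lying in some $S_j$ with $j > 1$ is non-adjacent to $x$. Also, since $C$ is a co-component of $G[S_1]$, the complement of $G[C]$ is connected, and every vertex of $S_1 - C$ lies in a different co-component of $G[S_1]$ and is therefore universal to $C$. The key sub-step is then: if $w$ is universal to $C$, all edges between $w$ and $C$ point the same way at $w$. Indeed, for $c, c' \in C$ choose a path $c = c_0, c_1, \dots, c_t = c'$ in the complement of $G[C]$; each $c_i c_{i+1}$ is a non-edge whose endpoints are both adjacent to $w$, so the forcing fact makes $wc_i$ and $wc_{i+1}$ agree at $w$, and following the path shows $wc$ and $wc'$ agree at $w$. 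Applying this with $w = x$ — and, after possibly reversing every edge direction (which preserves transitivity), taking $x \to c$ for every $c \in C$ — handles $z = x$, and applying it to each $w \in S_1 - C$ handles those vertices too.

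The remaining case, $z \in S_j$ with $j > 1$ and $N(z) \cap C \neq \emptyset$, is the one place the LBFS hypothesis is essential, and I expect it to be the main obstacle: such a $z$ may be adjacent to only a proper nonempty part of $C$, so co-connectedness of $C$ gives nothing about it directly. The fix is to bridge through $x$: since $z$ is non-adjacent to $x$, for each $c \in N(z) \cap C$ the edges $cx$ and $cz$ share the vertex $c$ while their other endpoints $x, z$ form a non-edge, so the forcing fact makes $cx$ and $cz$ agree at $c$; as $x \to c$, this forces $z \to c$. Thus every edge between $z$ and $C$ points into $C$. Since an active edge incident to a vertex of $C \subseteq S_1$ necessarily has its other endpoint in $\{x\} \cup S_2 \cup \cdots \cup S_k$, this also yields the coherence statement. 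The rest is the bookkeeping collected above; the single idea to get right is that it is the non-adjacency of $z$ to the slice's initial vertex $x$ (remark~\ref{rem:maximalSlice}, item~1) that transports the already-fixed orientation of the $x$--$C$ edges onto the edges from $z$.
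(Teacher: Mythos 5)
Your proof is correct and follows essentially the same route as the paper's: universality of $x$ to $C$ together with co-connectedness of $C$ forces all $x$--$C$ edges to agree (you walk a path in $\overline{G[C]}$ where the paper instead derives a forbidden join between $C_{in}$ and $C_{out}$), and non-adjacency of $S_2,\ldots,S_k$ to $x$ then transports that orientation outward. Your explicit restriction of the cross-vertex coherence claim to $\{x\} \cup S_2 \cup \cdots \cup S_k$ is in fact more careful than the lemma's literal wording, since for $w \in S_1 - C$ one only gets per-vertex agreement (a triangle $x,c_1,c_2$ with $C=\{c_1\}$ already shows global agreement can fail there), which is also all that the paper's own proof establishes.
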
  

\begin{proof}
Consider some transitive orientation of $G$.  Recall that $x$ must be universal to $C$ since $N(x) \cap S = S_1$, by item~1 of remark~\ref{rem:maximalSlice}.  Partition $C$ into $C_{out}$ and $C_{in}$, the former being those vertices in $C$ whose shared edge with $x$ is directed toward $x$, the latter defined as those vertices in $C$ whose shared edge with $x$ is directed away from $x$.  Suppose for contradiction that $C_{in} \ne \emptyset$ and $C_{out} \ne \emptyset$.  Then by virtue of $G$ being a comparability graph, there must be a join between $C_{in}$ and $C_{out}$, contradicting $C$ being a co-component of $G[S_1]$.  Since $x$ is not adjacent to any vertex in $V(G) - N[x]$, it follows that all edges between $C$ and $S_2 \cup \cdots \cup S_k$ must all be directed the same way as the edges between $C$ and $x$.
\end{proof}

In particular, the above lemma says that once the direction on a single edge between $x$ and $C$ is known, then all the directions of the edges between $x$ and $C$ are known (and the directions are all the same).  This will be useful during the first round of refinement described in section~\ref{sec:overview} where the goal is to determine the directions on the active edges incident to $x$.  During that round of refinement, the vertices in $C$ can be moved as one block.  This round of refinement therefore does not technically rely only on active edges.  This was the ``cheat'' alluded to in section~\ref{sec:overview}.

It is a simple matter to compute the co-components of each slice once the active edges for each vertex are known.  The following result characterizes these co-components and translates directly into algorithm~\ref{alg:coComponents}, which computes the co-components of any single slice.  

\begin{lemma} \label{lem:coComponents}
Let $S$ be a slice and $x,S_1,\ldots,S_k$ its maximal subslices in order.  Let $C_1,\ldots,C_{\ell}$ be the co-components of $G[S_1]$.  Assume that $\overline{G[S]}$ is disconnected.  Then:

\begin{enumerate}

\item $N(x) \cap S = S_1$;

\item $x$ and $S_2,\ldots,S_k$ all belong to the same co-component of $G[S]$, call it $C$;

\item $C_i \subset C$ or $C_i \cap C = \emptyset$, for all $i$.

\end{enumerate}

\end{lemma}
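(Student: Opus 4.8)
The plan is to establish the three items in order, each by a short argument in the complement $\overline{G[S]}$. For item~1, item~1 of remark~\ref{rem:maximalSlice} already gives $N(x)\cap S\in\{S_1,\emptyset\}$, so it suffices to rule out $N(x)\cap S=\emptyset$. If that held, $x$ would be adjacent in $G$ to no vertex of $S-\{x\}$, hence adjacent in $\overline{G[S]}$ to every vertex of $S-\{x\}$; then $x$ is universal in $\overline{G[S]}$, so $\overline{G[S]}$ is connected, contradicting the hypothesis. (The hypothesis also forces $|S|\ge 2$, hence $k\ge 1$, so the subslices $S_1,\dots,S_k$ referenced below are well defined.) Therefore $N(x)\cap S=S_1$.

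For item~2, using item~1 the vertex $x$ is non-adjacent in $G$ to every vertex of the set $S_2\cup\cdots\cup S_k$ (which is disjoint from $\{x\}\cup S_1$, since the maximal subslices partition $S$), hence adjacent in $\overline{G[S]}$ to all of them; thus $x$ together with $S_2,\dots,S_k$ lies in a single co-component of $G[S]$, which we name $C$. (When the only maximal subslices are $x$ and $S_1$, i.e.\ $k=1$, this simply defines $C$ as the co-component containing $x$.) For item~3, the key point is that $\overline{G[S_1]}$ is exactly the subgraph of $\overline{G[S]}$ induced on $S_1$, since $S_1\subseteq S$; in particular the edges of $\overline{G[S]}$ within any $C_i$ are precisely those of $\overline{G[S_1]}$ within $C_i$. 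Hence each co-component $C_i$ of $G[S_1]$, being connected in $\overline{G[S_1]}$, is connected in $\overline{G[S]}$ and so is contained in a single co-component of $G[S]$. If that co-component is $C$ then $C_i\subseteq C$, and the inclusion is proper because $x\in C$ by item~2 while $x\notin S_1\supseteq C_i$, giving $C_i\subset C$; otherwise $C_i$ lies in a co-component of $G[S]$ different from $C$, so $C_i\cap C=\emptyset$.

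I do not expect a genuine obstacle; the argument is elementary. The points needing a little care are the degenerate configurations — item~1 is precisely where the disconnectedness hypothesis is used, and item~2 must be read as merely naming $C$ when $k=1$ — together with the routine fact invoked for item~3, that a connected vertex set of an induced subgraph of $\overline{G[S]}$ is contained in a single connected component of $\overline{G[S]}$.
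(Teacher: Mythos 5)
Your proof is correct and follows essentially the same route as the paper: item~1 of remark~\ref{rem:maximalSlice} plus the disconnectedness of $\overline{G[S]}$ for item~1, universality of $x$ to $S_2\cup\cdots\cup S_k$ in the complement for item~2, and for item~3 the observation that each $C_i$ cannot be split across co-components of $G[S]$. The only cosmetic difference is that the paper phrases item~3 as a contradiction (an overlap would force a join between $C\cap C_i$ and $C_i-C$, contradicting $C_i$ being a co-component of $G[S_1]$), whereas you argue directly that $C_i$ is connected in $\overline{G[S]}$ and hence lies in a single co-component; these are the same elementary fact, and your added remarks on properness of the inclusion and the degenerate case $k=1$ are fine.
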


\begin{proof}
If $N(x) \cap S \ne S_1$, then $N(x) \cap S = \emptyset$, by item 1 of remark~\ref{rem:maximalSlice}.  In that case, $x$ is a universal vertex in $\overline{G[S]}$, contradicting the latter being disconnected.  Moreover, $x$ must be universal to $S_2 \cup \cdots S_k$ in $\overline{G[S]}$, meaning they must be in the same co-component.  Now suppose for contradiction that $C_i$ overlaps $C$.  Say $A = C \cap C_i$ and $B = C_i - A$.  Then there is a join between $A$ and $B$, contradicting $C_i$ being a co-component of $G[S_1]$.  Therefore, either $C_i \subset C$ or $C_i \cap C = \emptyset$.
\end{proof}

\begin{algorithm} 

\BlankLine

\KwIn{A slice $S$.}

\KwOut{The co-components of $G[S]$.}

\BlankLine

let $x,S_1,\ldots,S_k$ be the maximal subslices of $S$\;

\BlankLine

\lIf{$N(x) \cap S \ne S_1$} { \KwRet{$S$}}

\BlankLine

let $C_1,\ldots,C_{\ell}$ be the co-components of $G[S_1]$\;

\BlankLine

initialize $\mathcal{C}$ as the empty set\;
initialize $C$ as $\{x\} \cup S_2 \cup \cdots S_k$\;

\BlankLine

\ForEach{$C_i$} { 

\lIf{$C_i$ is universal to $C$}{ $\mathcal{C} \gets \mathcal{C} \cup \{C_i\}$}
\lElse{$C \gets C \cup C_i$}
}

\BlankLine

$\mathcal{C} \gets \mathcal{C} \cup \{C\}$\;

\BlankLine

\KwRet{$\mathcal{C}$}\;

\BlankLine

\caption{$CoComponents(S)$} \label{alg:coComponents}
\end{algorithm}

\begin{lemma} \label{lem:coComponentsCorrect}
Algorithm~\ref{alg:coComponents} is correct.
\end{lemma}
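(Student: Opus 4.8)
The plan is to derive correctness from Lemma~\ref{lem:coComponents} together with one elementary fact about the $C_i$: since $C_1,\ldots,C_\ell$ are the co-components of $G[S_1]$, any two distinct $C_i$ and $C_j$ have a join between them in $G$, so no edge of $\overline{G[S]}$ runs between distinct $C_i$'s (the subgraph of $\overline{G[S]}$ induced on $S_1$ is exactly $\overline{G[S_1]}$). Consequently, inside $\overline{G[S]}$ two distinct $C_i$'s can only lie in a common connected component by way of the set $C_0 := \{x\}\cup S_2\cup\cdots\cup S_k$ that the algorithm uses to seed $C$. The branch $N(x)\cap S\neq S_1$ is handled first: by item~1 of remark~\ref{rem:maximalSlice} it forces $N(x)\cap S=\emptyset$, so $x$ is isolated in $G[S]$ and therefore universal in $\overline{G[S]}$, whence $\overline{G[S]}$ is connected and $S$ is its only co-component, which is precisely what the algorithm returns.

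For the branch $N(x)\cap S=S_1$, let $D^{*}$ be the co-component of $G[S]$ that contains $x$. First note $C_0\subseteq D^{*}$: this is trivial when $\overline{G[S]}$ is connected (then $D^{*}=S$) and is item~2 of Lemma~\ref{lem:coComponents} otherwise. The key step is the dichotomy: for each $i$, if $C_i$ is universal to $C_0$ in $G$ then $C_i$ is itself a co-component of $G[S]$, while if it is not, then $C_i\subseteq D^{*}$. In the first case universality gives no $\overline{G[S]}$-edge from $C_i$ to $C_0$, and the elementary fact gives no $\overline{G[S]}$-edge from $C_i$ to any other $C_j$; since $C_i$ is nonempty and connected in $\overline{G[S]}$, it must be a full connected component of $\overline{G[S]}$, and it is not $D^{*}$ because $x\in D^{*}\setminus C_i$. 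In the second case some vertex of $C_i$ is non-adjacent in $G$ to some vertex of $C_0\subseteq D^{*}$, which forces $C_i\subseteq D^{*}$. Combining this with item~3 of Lemma~\ref{lem:coComponents} (trivial again when $\overline{G[S]}$ is connected), which says each $C_i$ is contained in $D^{*}$ or disjoint from it, and with the observation that every co-component of $G[S]$ other than $D^{*}$ is contained in $S_1$ and hence coincides with one of the $C_i$, one obtains that the co-components of $G[S]$ are exactly $D^{*}$ together with those $C_i$ that are universal to $C_0$.

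It remains to check that the loop reproduces this. I would prove by induction the invariant that, just before $C_i$ is processed, the current $C$ equals $C_0$ together with all already-processed $C_j$ that are not universal to $C_0$, and that the test ``$C_i$ universal to $C$'' returns the same verdict as ``$C_i$ universal to $C_0$''. The direction $(\Leftarrow)$ is immediate from $C_0\subseteq C$; for $(\Rightarrow)$, if $C_i$ is universal to $C_0$ then by the dichotomy $C_i$ is a co-component of $G[S]$ distinct from $D^{*}$, hence disjoint from every already-merged $C_j$ (all of which lie in $D^{*}$), so $C_i$ has no $\overline{G[S]}$-edge into $C$ and is universal to $C$. Hence the algorithm merges into $C$ exactly the $C_i\subseteq D^{*}$ and places into $\mathcal{C}$ exactly the remaining $C_i$; at termination $C=D^{*}$ and $\mathcal{C}\cup\{C\}$ is the complete set of co-components of $G[S]$. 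I expect this loop invariant to be the only delicate point: it is where the unspecified order of processing the $C_i$ could in principle matter, and the elementary fact that distinct $C_i$'s are non-adjacent in $\overline{G[S]}$ is exactly what makes the order irrelevant.
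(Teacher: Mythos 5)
Your proof is correct and follows the same route as the paper's: dispose of the $N(x)\cap S\neq S_1$ branch via item~1 of remark~\ref{rem:maximalSlice}, then reduce the main branch to lemma~\ref{lem:coComponents}. The paper's own proof dismisses the disconnected case with ``follows directly from lemma~\ref{lem:coComponents} and the relevant definitions''; you have simply written out those details --- the join between distinct $C_i$'s, the resulting dichotomy, and the loop invariant showing that testing universality against the growing $C$ is equivalent to testing it against $\{x\}\cup S_2\cup\cdots\cup S_k$ --- which is exactly the intended argument.
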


\begin{proof}
If $N(x) \cap S \ne S_1$, then by item~1 of remark~\ref{rem:maximalSlice}, $N(x) \cap S = \emptyset$.  Therefore $x$ is a universal vertex in $\overline{G[S]}$, and so there is a single co-component $S$.  In this case algorithm~\ref{alg:coComponents} correctly outputs $S$.  So assume that $N(x) \cap S = S_1$. Suppose that $\overline{G[S]}$ is connected.  Then there can be no $C_i$ that is universal to $C$.  Notice that algorithm~\ref{alg:coComponents} correctly outputs $S$ in this case.  The case where $\overline{G[S]}$ is disconnected follows directly from lemma~\ref{lem:coComponents} and the relevant definitions.
\end{proof}

All it takes now is repeated application of algorithm~\ref{alg:coComponents} to compute the co-components of each slice.  This can be done recursively as was done earlier for algorithm~\ref{alg:activeEdges}.  Let $S$ be a slice and $x,S_1,\ldots,S_k$ its maximal subslices in order.  Assume that the co-components for each $S_i$ have been recursively computed and are represented as partitions of that slice using the data-structure for partition refinement from section~\ref{sec:partitionRefinement}.  

It will be necessary to assume that for each maximal subslice, a list of its vertices has been recursively computed.  This was also required by the implementation of algorithm~\ref{alg:activeEdges} and can be handled the same way.  It will be assumed as well that algorithm~\ref{alg:activeEdges} precedes algorithm~\ref{alg:coComponents} and hence the active edges for $S$ are already known.  

Now consider algorithm~\ref{alg:coComponents} operating on $S$ in this recursive context.  The maximal subslices of $S$ are defined by the slice-tree.  By remark~\ref{rem:neighbour}, the first conditional amounts to testing if $x$ has any incident active edges for $S$, which is a constant time operation since the active edges for each vertex are assumed to have been computed already.

If the first conditional succeeds, then the list of vertices in $S$ is returned.  But this happens automatically as part of the maintenance required to preserve the assumption that the list of vertices in each maximal subslice has been recursively computed.  As explained above, this can be handled as it is in the implementation of algorithm~\ref{alg:activeEdges}.

So assume that the first conditional fails in algorithm~\ref{alg:coComponents}.  The co-components of $G[S_1]$ have been recursively computed, as assumed above.  To initialize $C$, recall that the list of vertices in each maximal subslice has been recursively computed.  Copy the lists for the maximal subslices $S_2,\ldots,S_{k-1}$, merge these copies into a single list, and then append the result to the list for $S_k$.  Doing so avoids ``touching'' disconnected vertices in the case that $G[S]$ is disconnected (see remark~\ref{rem:disconnected}).  This was the same approach used in the implementation of algorithm~\ref{alg:activeEdges} (see section~\ref{sec:activeEdges}).  For the same reasons as there, the total cost over all slices is $O(n+m)$.

Determining if $C_i$ is universal to $C$ merely requires traversing the list of active edges for $S$ that are incident to each vertex in $C_i$.  Adding $C_i$ to $\mathcal{C}$ is a constant-time operation, and so too is adding $C$ to $\mathcal{C}$ after the loop.  However, adding $C_i$ to $C$ within the loop is on the order of the size of $C_i$.  But notice that each vertex in $C_i$ has at least one incident active edge, namely $x$.  Hence, adding $C_i$ to $C$ is on the order of the number of active edges for $S$.  The cost of all this work is therefore on the order of the total number of active edges for $S$.  Hence, the total cost over all slices is $O(n+m)$, by remark~\ref{rem:activeOnce}.  This and the preceding discussion implies that the co-components of all slices can be computed in time~$O(n+m)$.

\section{Refinement} \label{sec:refinement}

Consider a prime comparability graph $G$.  Based on the initialization work in section~\ref{sec:initialization}, it is now possible to assume the following has been computed in linear-time: 

\begin{itemize}

\item A source vertex $x \in V(G)$;

\item an LBFS ordering with $x$ as its initial vertex; 

\item the graph induced by the active edges for each slice defined by this LBFS ordering; and

\item the set of co-components for each of these slices.  

\end{itemize}

Given the above, a linear extension for $G$ can be computed in linear-time using partition refinement as described in the following subsections.

\subsection{Algorithm}

Two changes to the type of refinement defined by algorithm~\ref{alg:pivot} will be needed to produce a linear extension in linear-time.  The first involves using only a subset of a vertex's neighbours, not the whole neighbourhood as is done in algorithm~\ref{alg:pivot}.  A vertex will be pivot once for every one of their containing slices.  The subset of neighbours processed each time will be those whose corresponding edge is active for that slice.  One of the reasons this works relates to co-components and lemma~\ref{lem:coSameDir}.  Another is the following similar lemma:

\begin{lemma} \label{lem:sliceSameDir}
Consider an LBFS ordering of a comparability graph $G$.  Let $S$ be one of the slices and $x,S_1,\ldots,S_k$ its maximal subslices in order.  If $N(x) \cap S \ne \emptyset$, then in any transitive orientation, all edges between $S_i$ and $S_j$, $1 < i < j$, must be directed the same way.  
\end{lemma}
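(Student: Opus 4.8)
The plan is to leverage Lemma~\ref{lem:sameDirection}, which already establishes that for any $S_i$ with $i > 1$, all edges between $S_i$ and a single vertex $z \notin S_i$ are directed the same way. The task here is to upgrade this from ``a single fixed $z$'' to ``all vertices of $S_j$ simultaneously,'' i.e., to show that the common direction between $S_i$ and $z$ does not depend on the choice of $z \in S_j$. First I would fix a transitive orientation of $G$ and fix indices $1 < i < j$. By Lemma~\ref{lem:sameDirection}, for each vertex $z \in S_j$ the edges between $z$ and all of $S_i$ point the same way; call this direction $d(z) \in \{\text{toward } S_i, \text{away from } S_i\}$. I want to show $d(z)$ is constant over $z \in S_j$.

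The key structural input is condition~3 of Remark~\ref{rem:maximalSlice}: since $1 < i < j$, there is a vertex $q \in S_\ell$ with $\ell < i$ that is universal to $S_i$ and isolated from $S_j$. Now consider any two vertices $z, z' \in S_j$. Both are non-adjacent to $q$, while $q$ is adjacent to every vertex of $S_i$. Pick any $w \in S_i$. Since $q$ is universal to $S_i$, all edges from $q$ into $S_i$ are directed the same way (this is the case $j < i$ of Lemma~\ref{lem:sameDirection} applied with $z = q$, or directly from condition~3 of the remark as used in that lemma's proof) — so in particular the edge $qw$ has a fixed direction, say toward $w$ (the other case is symmetric). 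If the edge $zw$ were directed away from $w$ (toward $z$), then transitivity on $q \to w \to z$ would force the edge $qz$ to exist, contradicting $q$ isolated from $S_j$; hence $zw$ must be directed toward $w$ as well. The identical argument applies to $z'$, so $z'w$ is also directed toward $w$. Since $w \in S_i$ was arbitrary and the direction between each of $z, z'$ and all of $S_i$ is internally constant, we conclude $d(z) = d(z')$. As $z, z'$ were arbitrary in $S_j$, the edges between $S_i$ and $S_j$ are all directed the same way.

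The main obstacle I anticipate is handling the corner case where the edge $qw$ points the ``other'' way — but this is genuinely just the mirror image, and one can avoid case analysis altogether by noting that the argument only uses that $q$'s edges into $S_i$ have a fixed common direction and that $q$ is non-adjacent to every vertex of $S_j$: transitivity through $q$ then forces every vertex of $S_j$ to attach to $S_i$ in the same orientation relative to $q$'s orientation, hence uniformly. A secondary point worth double-checking is the existence of the witness vertex $q$ when $S_i$ or $S_j$ might be $S_k$ and issues of disconnectedness arise; but condition~3 of Remark~\ref{rem:maximalSlice} is stated unconditionally for $1 < i < j$, so no extra hypothesis (such as the connectedness appealed to in Lemma~\ref{lem:universalSlice}) is needed here. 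The hypothesis $N(x) \cap S \neq \emptyset$ is used only to guarantee, via item~1 of Remark~\ref{rem:maximalSlice}, that $S_1 = N(x) \cap S$ is a genuine maximal subslice distinct from the others, so that the indexing $1 < i < j$ is meaningful; it plays no further role in the argument.
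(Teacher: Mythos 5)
Your proof is correct, and it differs from the paper's in the one sub-step that carries the real weight. Both arguments share the same skeleton: take the witness $q\in S_\ell$, $\ell<i$, from condition~3 of remark~\ref{rem:maximalSlice} (universal to $S_i$, isolated from $S_j$), show that $q$'s edges into $S_i$ all point the same way, and then let transitivity through the non-edge $qz$ force every $S_i$--$S_j$ edge into a single direction. Where you diverge is in establishing the uniformity of $q$'s edges into $S_i$: you obtain it in one line by applying the already-proved lemma~\ref{lem:sameDirection} with $z=q\notin S_i$ (a legitimate, non-circular citation, since that lemma is proved from remark~\ref{rem:maximalSlice} alone and well before this point), whereas the paper runs a minimality argument -- choose the leftmost offending $S_i$, so that uniformity of the $S_\ell$--$S_i$ edges for $\ell>1$ follows from the lemma itself applied to an earlier pair -- and falls back on lemma~\ref{lem:coSameDir} (co-components of $G[S_1]$) for the base case $\ell=1$. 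Your route is shorter and decouples this lemma from the co-component machinery entirely; the paper's route keeps the statement self-contained relative to lemma~\ref{lem:coSameDir} and, via the leftmost choice, incidentally records that the edges between whole pairs $S_\ell$ and $S_i$ move as blocks, which mirrors how the lemma is used later. Two small remarks: your preliminary invocation of lemma~\ref{lem:sameDirection} to define $d(z)$ for $z\in S_j$ is redundant, since the forcing argument through $q$ already pins down the direction of every individual edge $zw$; and your closing observation about the hypothesis $N(x)\cap S\ne\emptyset$ should also note that this hypothesis is what licenses the application of lemma~\ref{lem:sameDirection} itself, not merely what makes the indexing meaningful.
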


\begin{proof}
Assume for contradiction that the claim does not hold for some pair $S_i$ and $S_j$, $1 < i < j$.  Let $S_i$ be the leftmost maximal slice for which there is a corresponding $S_j$.  By remark~\ref{rem:maximalSlice}, there is a vertex $y \in S_{\ell}$, $\ell < i$, that is universal to $S_i$ and isolated from $S_j$.  So by choice of $S_i$, if $\ell > 1$, then all of $y$'s incident edges with $S_i$ must be directed the same way.  The same is also true if $\ell = 1$, by lemma~\ref{lem:coSameDir}.  It follows that all edges between $S_i$ and $S_j$ must be directed the same way: namely, the opposite direction of those between $y$ and $S_i$.
\end{proof}

Algorithm~\ref{alg:generalizedPivot} below generalizes the pivot operation so that only a subset of a vertex's neighbours are processed.  Once again, the input vertex $p$ is referred to as a \emph{pivot}.  Remark~\ref{rem:generalizedPivot} is the analogue of remark~\ref{rem:pivot}.  The running-time of algorithm~\ref{alg:generalizedPivot} is clearly $O(|S|)$ based on the implementation from section~\ref{sec:partitionRefinement}.    

\begin{algorithm} 

\BlankLine

\KwIn{An ordered partition $\mathcal{P} = P_1,\ldots,P_k$ of vertices, a distinguished vertex $p \in P_i$, and a set of vertices $S \subseteq N(p)$.}

\KwOut{A refinement of $\mathcal{P}$.}

\BlankLine
\BlankLine

\ForEach{$P_j \in \mathcal{P}, i \ne j,$ such that $S$ splits $P_j$} {

\BlankLine

$A \gets P_j - S$\;
$B \gets P_j - A$\;

\BlankLine

\lIf{$i < j$}{replace $P_j$ in $\mathcal{P}$ with $A,B$ in order}
\lElse{replace $P_j$ in $\mathcal{P}$ with $A,B$ in order}

\BlankLine

}

\BlankLine

\KwRet{$\mathcal{P}$}\;

\BlankLine

\caption{$GeneralizedPivot(\mathcal{P},p,S)$} \label{alg:generalizedPivot}
\end{algorithm}

\begin{remark} \label{rem:generalizedPivot}
Suppose that $A,B$ are partition classes (in order) that replace a partition class $P$ during algorithm~\ref{alg:generalizedPivot}.  Then $A,B,S \ne \emptyset$, $p \notin P$, and if $p$ is in an earlier partition class than $P$, then $A \cap S = \emptyset$ and $B \subseteq S$, otherwise $A \subseteq S$ and $B \cap S = \emptyset$.  Moreover, if $P$ is not replaced during algorithm~\ref{alg:generalizedPivot}, then either $p \in P$, $P \cap S = \emptyset$, or $P \subseteq S$.
\end{remark}

The second of the required changes to algorithm~\ref{alg:pivot} involves targeting a specific partition class: instead of allowing all partition classes to be subdivided as in algorithm~\ref{alg:pivot}, only one partition class specified in the input can be subdivided.  However, the order of the two classes subdividing the original partition class must additionally be allowed to vary.  This leads to algorithms~\ref{alg:pivotPush} and~\ref{alg:pivotPull} below.  As before, the input vertex $p$ is referred to as a \emph{pivot} in both cases.  Remarks~\ref{rem:pivotPush} and~\ref{rem:pivotPull} are the analogue of remark~\ref{rem:pivot}.  The running-time of each algorithm is clearly $O(|S|)$ based on the implementation from section~\ref{sec:partitionRefinement}.  

\begin{algorithm} 

\BlankLine

\KwIn{An ordered partition $\mathcal{P} = P_1,\ldots,P_k$ of vertices, a distinguished vertex $p \in P_i$ and partition class $P = P_j, j \ne i$, and a set of vertices $S$.}

\KwOut{A refinement of $\mathcal{P}$.}

\BlankLine
\BlankLine

\If{$S$ splits $P_j$} {
$A \gets P_j - S$\;
$B \gets P_j - A$\;
}

\BlankLine

\lIf{$i < j$}{replace $P_j$ in $\mathcal{P}$ with $A,B$ in order}
\lElse{replace $P_j$ in $\mathcal{P}$ with $B,A$ in order}

\BlankLine

\KwRet{$\mathcal{P}$}\;

\BlankLine

\caption{$PivotPush(\mathcal{P},p,P,S)$} \label{alg:pivotPush}
\end{algorithm}

\begin{remark} \label{rem:pivotPush}
Suppose that $A,B$ are partition classes (in order) that replace partition class $P_j$ during algorithm~\ref{alg:pivotPush}.  Then either $A \subseteq S$ and $S \cap B = \emptyset$ (if $j > i$ in the algorithm) or $B \subseteq S$ and $A \cap S = \emptyset$ (if $i < j$ in the algorithm).  
\end{remark}

\begin{algorithm} 

\BlankLine

\KwIn{An ordered partition $\mathcal{P} = P_1,\ldots,P_k$ of vertices, a distinguished vertex $p \in P_i$ and partition class $P = P_j, j \ne i$, and a set of vertices $S$.}

\KwOut{A refinement of $\mathcal{P}$.}

\BlankLine
\BlankLine

\If{$S$ splits $P_j$} {
$A \gets P_j - S$\;
$B \gets P_j - A$\;
}

\BlankLine

\lIf{$i < j$}{replace $P_j$ in $\mathcal{P}$ with $B,A$ in order}
\lElse{replace $P_j$ in $\mathcal{P}$ with $A,B$ in order}

\BlankLine

\KwRet{$\mathcal{P}$}\;

\BlankLine

\caption{$PivotPull(\mathcal{P},p,P,S)$} \label{alg:pivotPull}
\end{algorithm}

\begin{remark} \label{rem:pivotPull}
Suppose that $A,B$ are partition classes (in order) that replace partition class $P_j$ during algorithm~\ref{alg:pivotPush}.  Then either $A \subseteq S$ and $S \cap B = \emptyset$ (if $j < i$ in the algorithm) or $B \subseteq S$ and $A \cap S = \emptyset$ (if $i > j$ in the algorithm).  
\end{remark}

Algorithm~\ref{alg:linearExtension} below employs the new forms of refinement defined above to compute the desired linear extension.  As promised in section~\ref{sec:overview}, algorithm~\ref{alg:linearExtension} processes each slice in order, and for each, there are two rounds of refinement, corresponding below to the two inner loops.  

\begin{algorithm}

\BlankLine

\KwIn{The set of slices $\mathcal{S} = S_1,\ldots,S_n$ of a prime comparability graph $G$ defined by some LBFS ordering in which the initial vertex $x$ is a source.}

\KwOut{A linear extension of $G$.}

\BlankLine

$\mathcal{P} \gets$ the ordered partition $\{x\},V(G) - \{x\}$\;

\BlankLine

\For{$S_1$ \KwTo $S_n$ (in order)} {

\BlankLine

let $y$ be the initial vertex for $S_i$\;
let $P_y$ be the partition class in $\mathcal{P}$ (and throughout what follows) such that $y \in P_y$\;

\BlankLine

\While{there is a connected vertex $z$ for $S_i$ that has not been pivot and such that $z \in S_i - P_y$} {

\BlankLine

\uIf{$z \in \alpha_i(y)$} { 

$C \gets$ the co-component of $G[\alpha_i(y)]$ such that $z \in C$\;
$S \gets \alpha_i(z) \cup \alpha_i(y) - C$\;

}

\Else{

$C \gets \emptyset$\;
$S \gets \alpha_i(z)$\;

}

\BlankLine

$\mathcal{P} \gets PivotPull(\mathcal{P},z,P_y,C)$  \tcp{Algorithm~\ref{alg:pivotPull}}
$\mathcal{P} \gets PivotPush(\mathcal{P},z,P_y,S)$ \tcp{Algorithm~\ref{alg:pivotPush}}

\BlankLine

}

\BlankLine

replace $P_y$ in $\mathcal{P}$ with $\{y\},P_y - \{y\}$ in order

\BlankLine

\ForEach{connected vertex $z \in S_i$ (in order)} { 

$\mathcal{P} \gets GeneralizedPivot(\mathcal{P},z,\alpha_i(z))$ \tcp{Algorithm~\ref{alg:generalizedPivot}} \;

}

\BlankLine

}

\BlankLine
\KwRet{$\mathcal{P}$}\;

\caption{$LinearExtension(\mathcal{S})$} \label{alg:linearExtension}
\end{algorithm}

As alluded to in section~\ref{sec:overview}, the first round of refinement is designed to remove the neighbours of each initial vertex from its containing partition class.  Algorithms~\ref{alg:pivotPush} and~\ref{alg:pivotPull} are used for this purpose, the targeted partition class being the one containing the initial vertex.  During this round of refinement, the co-components are moved as one block for the reasons described in section~\ref{sec:coComponents}.  The ordered partition maintained throughout remains consistent with a linear extension.  

Once the initial vertex is isolated from its neighbours, it can be removed from its containing partition class with the resulting ordered partition still being consistent with a linear extension.  Each vertex is then processed in order and refinement takes place according to their active edges for that slice.  With the initial vertex isolated, and by proceeding in order, this has the effect of refining the ordered partition in terms of the maximal subslices of the current slice.  Intuitively, lemma~\ref{lem:sliceSameDir} ensures the result remains consistent with a linear extension.

To illustrate the operation of algorithm~\ref{alg:linearExtension}, we return to the graph in figure~\ref{fig:example} and the set of slices $S_1,\ldots,S_{10}$ defined earlier for it in section~\ref{sec:LBFS}.  It may also help to recall the corresponding slice tree in figure~\ref{fig:sliceTree} and the listing of active edges outlined in sections~\ref{sec:slices} and~\ref{sec:activeEdges}, respectively.  

In terms of algorithm~\ref{alg:linearExtension}, we first note that $\mathcal{P} = \{x\},\{b,y,u,z,q,w,r,v,a\}$ prior to any iteration of the outer for loop.  Now consider what happens on the first iteration of the outer for loop.  Note that each of $b,y,u,z,q,w,r,v,z,$ and $a$ are connected vertices for $S_1$ not in the same partition class as $x$, the initial vertex for $S_1$.  However, since the other endpoint of each of their active edges shares the same partition class as them, the inner for loop has no impact on $\mathcal{P}$.  Of course, neither does the line immediately preceding the inner for loop.  The following highlights the values of $\mathcal{P}$ following each iteration of the inner for loop:

\begin{itemize}

\item (After $x$ is pivot): $\mathcal{P} = \{x\},\{y,u,z,q,w,r,v,a\},\{b\}$;

\item (After $b$ is pivot): $\mathcal{P} = \{x\},\{y,u,z,q,w,r,v\},\{a\},\{b\}$;

\item (After $y,u,z,q,w,r,$ and $v$ are pivot): $\mathcal{P} = \{x\},\{y,u,z,q,w,r,v\},\{a\},\{b\}$;

\item (After $a$ is pivot): $\mathcal{P} = \{x\},\{z\},\{y,u,q,w,r,v\},\{a\},\{b\}$.

\end{itemize}

Now consider what happens on the second iteration of the outer for loop.  Since there are no active edges for $S_2$, there are no connected vertices for $S_2$, and this iteration has no impact on $\mathcal{P}$.  

On the third iteration of the outer for loop, we have the following connected vertices for $S_3$: $y,u,z,q,w,r,v$; of these, $z$ is the only one not in the same partition class as $y$, the initial vertex for $S_3$.  Therefore the while loop executes, and since $z \in \alpha_3(y)$, the first branch of the if statement succeeds.  Looking at $G[\alpha_3(y)]$ = $G[\{u,z,q,w\}]$, we notice that $C = \{z,q,w\}$ in this case; also note that $S = \{y,u,q,w\}$.  Therefore, $\mathcal{P} = \{x\},\{z\},\{q,w\},\{y,u,r,v\},\{a\},\{b\}$ after the first iteration of the while loop.

Notice now that $q$ and $w$ are no longer in the same partition class as $y$.  Therefore the while loop continues to iterate.  Assume that $q$ is selected on the next iteration.  Since $q \in \alpha_3(y)$, the first branch of the if statement succeeds.  Once again, $C = \{z,q,w\}$, but now $S = \{y,v,u,z,w\}$.  Therefore, $\mathcal{P} = \{x\},\{z\},\{q,w\},\{r,v\},\{y,u\},\{a\},\{b\}$ after this iteration of the while loop.  On the next iteration of the while loop, assume $w$ is selected.  Then $C = \{z,q,w\}$ and $S = y,u,z,q$.  As a result, there is no change, and we still have $\mathcal{P} = \{x\},\{z\},\{q,w\},\{r,v\},\{y,u\},\{a\},\{b\}$.

As a result of $q$ and $w$'s refinement on the previous iterations of the while loop, we no longer have $r$ and $v$ sharing the same partition class.  Therefore the while loop continues to operate.  Assuming $r$ is chosen next, the second branch of the if-statement succeeds because $r \notin \alpha_3(y)$.  Hence, $C = \emptyset$ and $S = \alpha_3(r) = \{u\}$.  Therefore, $\mathcal{P} = \{x\},\{z\},\{q,w\},\{r,v\},\{y\},\{u\},\{a\},\{b\}$ after this iteration.  Assuming now that $v$ is chosen on the next iteration, we have $C = \emptyset$ (because the second branch of the if statement succeeds) and $S = \{q\}$, meaning there is no change and $\mathcal{P} = \{x\},\{z\},\{q,w\},\{r,v\},\{y\},\{u\},\{a\},\{b\}$.

It is now no longer possible to further refine $y$'s partition class.  Thus, even though $u$ is a connected vertex not in $y$'s partition class, when it is chosen, which will have to be on the next iteration of the while loop, it will have no effect on $\mathcal{P}$.  After this, all connected vertices for $S_3$ not in $y$'s partition class will have been selected and no further iterations of the while loop will execute.  

Moving on, observe that the line immediately prior to the inner for loop also has no impact on $\mathcal{P}$ due to $y$ being in a singleton class.  We now consider what happens to $\mathcal{P}$ on each iteration of the inner for loop.  

Recall that the connected vertices for $S_3$ (in order) are: $y,u,z,q,w,r,v$.  The following demonstrates the state of $\mathcal{P}$ after each of these vertices is selected as pivot on subsequent iterations of the inner for loop:

\begin{itemize}

\item (After $y$ is pivot): $\mathcal{P} = \{x\},\{z\},\{q,w\},\{r,v\},\{y\},\{u\},\{a\},\{b\}$;

\item (After $u$ is pivot): $\mathcal{P} = \{x\},\{z\},\{q,w\},\{r\},\{v\},\{y\},\{u\},\{a\},\{b\}$;

\item (After $z,q,w$ and $r$ are pivot): $\mathcal{P} = \{x\},\{z\},\{q,w\},\{r\},\{v\},\{y\},\{u\},\{a\},\{b\}$;

\item (After $v$ is pivot): $\mathcal{P} = \{x\},\{z\},\{q\},\{w\},\{r\},\{v\},\{y\},\{u\},\{a\},\{b\}$.

\end{itemize}

Notice that all partition classes are now singleton sets.  As algorithm~\ref{alg:LBFS} clearly never unions any two partition classes, we can stop our tracing of the algorithm as $\mathcal{P}$ will not be further modified.  Observe that $\mathcal{P}$ defines a linear extension of $G$ corresponding to the directions on the edges in figure~\ref{fig:transitiveOrientation}.

\subsection{Implementation and Running-Time}

\subsubsection{Preprocessing}

First consider the portion of algorithm~\ref{alg:linearExtension} outside the outer for loop.  Recall from section~\ref{sec:refinement} that we can assume we have access to the underlying LBFS ordering defining the slices with $x$ as the initial vertex.  We can therefore assume access to $x$ in $O(n)$ time.  Given $x$, the initial partition $\{x\},V(G) - \{x\}$ can be formed in $O(n)$ time using the partition refinement data-structure from section~\ref{sec:partitionRefinement}.  Before moving on to the outer for loop itself, we will need to perform some preprocessing to ensure its efficient execution.  

First, we need to compute the initial vertex for each slice.  But recall the correspondence between slices and vertices.  The initial vertex for $S_i$  is therefore the $i^{th}$ vertex in the LBFS ordering.  Therefore the initial vertex for each slice can be computed in $O(n)$ time.  Given this, access to $P_y$ can be obtained in constant time throughout the outer for loop using a simple array data structure.    

Next, we will need to compute an ordered list of connected vertices for each slice, where the ordering is determined by the LBFS.  This will specifically be needed for the efficient execution of the inner for loop.  To help us, we'll use the graph induced by the active edges for each slice, which we computed earlier during initialization (see section~\ref{sec:refinement}); we'll assume this is represented as an adjacency list.  Notice that it suffices to order the vertices in this adjacency list according to the underlying LBFS ordering.

To do this, we'll need to first make a copy of the LBFS ordering, to which as we noted above, we can assume access.  Then, we'll need to associate a linked list (initialized as empty) with each vertex in that copy.  After doing this, we'll process each vertex in the graph induced by the active edges for $S_1$, appending a reference to itself in the list associated with it in the LBFS copy.  We'll then do the same for each of $S_2,\ldots,S_n$ in order.  Afterwards, the list associated with each vertex in the LBFS copy clearly references its position in the adjacency list of every graph induced by active edges (as long as it appears there).  In other words, the list corresponds to those slices where the vertex is connected.

The next step is to go through the copy of the LBFS in order, traversing the list associated with each vertex in order, following each reference, then removing that vertex from the graph's adjacency list only to then append it to that same adjacency list.  After doing this, it should be clear that the list of vertices for every graph induced by the active edges for each slice has been ordered according to the underlying LBFS ordering.  

The cost of doing this is clearly proportional to the sum (over all vertices) of the number of slices each vertex belongs to for which it is connected.  By definition, each vertex is connected for a slice if it has an incident active edge for that slice.  But each edge incident to a vertex is active for at most one slice, by remark~\ref{rem:activeOnce}.  Hence, the total cost to order the vertices in the graph induced by the active edges of each slice is $O(n+m)$.

\subsubsection{The Outer For Loop}

Now consider each iteration of the outer for-loop.  As we noted above, each initial vertex $y$ and its partition class $P_y$ can be located in constant-time.  Notice that access to $P_y$ is only needed a constant number of times on each iteration.  Hence, the total cost of this access over the entire algorithm is ~$O(n)$.

To evaluate the condition in the while loop, we will maintain a list of connected vertices for $S_i$ that have not been pivot, and we will partition this list into two halves: those that are in $P_y$ and those that are not.  Given such a list, the condition in the while loop can clearly be evaluated in constant time using the data structure of section~\ref{sec:partitionRefinement}.

Such a list will first need to be constructed prior to any iteration of the while loop.  This can be performed by scanning the list of vertices in the graph induced by the active edges for $S_i$, to which we have access as noted earlier.  At each vertex, it is a constant time operation to verify it is in $P_y$ assuming a pointer to $P_y$ and the data structure for partition classes from section~\ref{sec:partitionRefinement}.  The total cost of initializing this list for each $S_i$ over the entire algorithm is therefore $O(n+m)$, by applying remark~\ref{rem:activeOnce} in the same fashion we did above.  Maintaining this list of vertices (that has not been pivot) is clearly proportional to the time to maintain $\mathcal{P}$ itself, and hence we turn to that now.  

Recall from section~\ref{sec:refinement} that the co-components of each $S_i$ can be assumed to be computed.  Therefore, if the first branch of the if-statement in the while loop succeeds, the vertices in $C$ can be identified and removed from $P_y$ in $|C|$ time during the call to algorithm~\ref{alg:pivotPull}.  A key observation for the running time is that this only needs to be done once.  That is, once a vertex from $C$ has been pivot all vertices in $C$ will have been removed from $P_y$ already.  Therefore, on any iteration of the outer for loop, the vertices in $C$ only need to be processed once.  Of course, each of these vertices is adjacent to the initial vertex for the slice being processed, and each of the incident edges is active for that slice.  So once again, by remark~\ref{rem:activeOnce}, the total cost of algorithm~\ref{alg:pivotPull} is $O(n+m)$.

Consider now the total cost of identifying and removing $S$ from $P_y$  over the course of the algorithm.  Of course, $S$ can be computed and removed from $P_y$ by traversing $\alpha_i(z)$ and $\alpha_i(y)$, to which we have constant time access via the graph induced by the active edges for $S_i$, as noted earlier.  But that might potentially lead to traversing $\alpha_i(y)$ multiple times during the while loop, which would not be consistent with linear time overall.  

The solution is to process $\alpha_i(y)$ and $y$ as one ``block''.  What we mean by this is that the first time $\alpha_i(y)$ is processed, the portion in $P_y$ should be made children of the vertex $y$ in the partition refinement data structure of section~\ref{sec:partitionRefinement}.  Then, for the duration of the while loop, ignore the requirement to include $\alpha_i(y)$ as part of $S$.  This works because the only time when $\alpha_i(y)$ must be added to $S$ is when the first branch of the conditional in the while loop succeeds, which happens when $y \in \alpha_i(z)$: if the vertices of $\alpha_i(y) \cap P_y$ are children of $P_y$, then they will move with $y$ as part of $\alpha_i(z)$ as they should.  

It may happen however that some portion of $\alpha_i(z) \cap P_y$ needs to be included in $S$ when the second branch of the conditional succeeds.  In this case, the vertices in question can be removed from ``under'' $y$ as though $y$ were a nested partition class within $P_y$.  This adds at most a constant amount of work to each step in the partition refinement implementation of section~\ref{sec:partitionRefinement}.  At the end of the while loop, one more traversal of $\alpha_i(y)$ can restore $P_y$ to its original condition.  

Overall, then, the identification and removal of $S$ from $P_y$ results in $\alpha_i(y)$ being traversed at most twice.  The remaining work involves traversing $\alpha_i(z)$ on each iteration of the while loop.  This work is consistent with linear time overall $O(n+m)$ due to remark~\ref{rem:activeOnce} as above.  

Since the line immediately preceding the inner for loop is clearly a constant time operation, it only remains to consider the inner for loop itself.  Using the partition refinement implementation of section~\ref{sec:partitionRefinement}, the cost of each iteration is clearly proportional to $\alpha_i(z)$.  Each vertex is only pivot once, so overall, the cost of the inner for loop is clearly consistent with linear time, meaning $O(n+m)$.  From this and our work above, we can conclude the following: 

\begin{lemma} \label{lem:runningTime}
Algorithm~\ref{alg:linearExtension} can be implemented to run in $O(n+m)$ time.
\end{lemma}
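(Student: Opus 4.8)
The plan is to split the running-time accounting into the work done outside the outer for loop (preprocessing) and the work done inside it, and in each case to reduce every contribution either to $O(n)$ or, via remark~\ref{rem:activeOnce}, to $O(n+m)$. The single recurring device is that each edge is active for exactly one slice, so $\sum_i \sum_z |\alpha_i(z)| = O(m)$; nearly every term in the analysis is bounded by a constant number of such sums, together with $O(n)$ bookkeeping.

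For the preprocessing, the source vertex $x$, the LBFS ordering, the active-edge graphs $G(S_i)$, and the co-components of every slice are all available from section~\ref{sec:initialization} in $O(n+m)$ time; forming the initial partition $\{x\},V(G)-\{x\}$ with the data structure of section~\ref{sec:partitionRefinement} and reading off the initial vertex of each slice from the LBFS ordering both cost $O(n)$. The only nontrivial preprocessing step is ordering, for every slice $S_i$, the adjacency lists of $G(S_i)$ according to the underlying LBFS order; I would do this by the bucketing pass described in the text — attach an empty list to each vertex of a copy of the LBFS ordering, scan $G(S_1),\ldots,G(S_n)$ in order recording where each vertex appears, then sweep the LBFS copy and re-append vertices in order. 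The cost is a constant amount of work per occurrence of a vertex in some $G(S_i)$, hence $O(n+m)$ by remark~\ref{rem:activeOnce}.

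Next I would bound one iteration of the outer for loop for slice $S_i$. Locating $y$ and $P_y$ is $O(1)$ with an auxiliary array, so this contributes $O(n)$ overall. Initializing the list of connected vertices of $S_i$ not yet used as pivot, split by membership in $P_y$, is one scan of the vertices appearing in $G(S_i)$ and so contributes $O(n+m)$ in total. In the while loop, the call $PivotPull(\mathcal{P},z,P_y,C)$ removes $C$ from $P_y$ in $O(|C|)$ time, but this occurs at most once per co-component of $G[\alpha_i(y)]$ — after the first such move those vertices are no longer in $P_y$ — and every vertex of $C$ is incident to the active edge joining it to $y$, so the total over the algorithm is $O(n+m)$. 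The remaining work in the while loop is computing and removing $S$ from $P_y$; the obstacle is that naively re-traversing $\alpha_i(y)$ on each iteration would cost $|\alpha_i(y)|$ per iteration, which is too expensive. I would resolve this exactly as suggested in the text: the first time $\alpha_i(y)$ is touched, make the vertices of $\alpha_i(y)\cap P_y$ children of $y$ in the partition-refinement structure and thereafter drop $\alpha_i(y)$ from $S$; since the first branch of the conditional fires precisely when $y\in\alpha_i(z)$, those vertices then travel with $y$ automatically, while any vertex that must instead leave via the second branch can be pulled out from under $y$ at $O(1)$ extra cost per refinement step, and one final traversal of $\alpha_i(y)$ restores $P_y$. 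Hence $\alpha_i(y)$ is traversed at most twice per slice and $\alpha_i(z)$ at most once per iteration of the while loop, so all while loops together cost $O(n+m)$ by remark~\ref{rem:activeOnce}.

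Finally, splitting $\{y\}$ off $P_y$ is $O(1)$, and each iteration of the inner for loop costs $O(|\alpha_i(z)|)$ with the implementation of section~\ref{sec:partitionRefinement}; since each vertex is pivot at most once per slice and its active edges for distinct slices are disjoint, this also sums to $O(n+m)$. Adding the four $O(n+m)$ contributions proves the lemma. I expect the heart of the argument — and the only step needing more than routine bookkeeping — to be the ``block'' treatment of $\alpha_i(y)$ inside the while loop, where one must verify both that it reproduces the behaviour demanded by the if-statement and that the $O(1)$-per-step overhead it introduces into the partition-refinement primitives really is constant.
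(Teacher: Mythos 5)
Your accounting matches the paper's own analysis step for step: the same $O(n)$ bookkeeping for the initial partition and initial vertices, the same LBFS-order bucketing of the active-edge adjacency lists, the once-per-co-component charge for $PivotPull$, the ``block'' treatment of $\alpha_i(y)$ with a final restoring traversal, and the reduction of everything else to $\sum_i\sum_z|\alpha_i(z)|=O(m)$ via remark~\ref{rem:activeOnce}. This is essentially the paper's proof, correctly reproduced.
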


\subsection{Correctness} \label{sec:correctness}

Having shown that algorithm~\ref{alg:linearExtension} can be implemented to run in linear-time, it only remains to prove its correctness.  To this end, notice that on each iteration of the outer for-loop, the initial vertex of the slice under consideration is isolated in its own partition class.  Thus, the ordered partition at the conclusion of the outer for-loop induces an ordering of the vertices.  It therefore suffices to prove that the ordered partition maintained throughout algorithm~\ref{alg:linearExtension} is consistent with a linear extension.  In order to accomplish this, the corresponding invariant needs to be strengthened as below.

\begin{invariant} \label{inv:consistency}
After the $i^{th}$ iteration of the outer for-loop:

\begin{enumerate}

\item The endpoints of any edge that is active for the slice $S_i$ reside in different partition classes in $\mathcal{P}$, and if $i > 0$, each endpoint of such an edge is universal to the partition class containing the other endpoint;

\item If $i < n$, then there is at least one active edge with respect to $S_{i+1}$ whose endpoints reside in different partition classes;

\item The ordered partition $\mathcal{P}$ is consistent with a linear extension.

\end{enumerate}

\end{invariant}

The invariant provides some intuition as to the operation and correctness of the algorithm.  Condition~1 says that at the end of each iteration of the outer for-loop, no further refinement is possible using active edges with respect to already considered slices.  Moreover, condition~3 says that all refinement up until that point has been consistent with a linear extension.  And finally, condition~2 says that on the next iteration, there will be at least one ``seed'' edge to begin the process of refining using the active edges with respect to the next slice under consideration.  

Starting from this seed edge, refinement takes place in a manner and order that ensures that all active edges for that slice will be used to refine the ordered partition maintained by the algorithm, and in such a way that is consistent with a linear extension.  By the end of the algorithm, every edge will have been used to refine the ordered partition.  All such refinement will have been consistent with a linear extension.  As remarked above, only singleton sets will remain, thereby inducing a linear extension.

The rest of this section constitutes one large induction proof of invariant~\ref{inv:consistency}.  Each part -- base case, induction hypothesis, induction step -- will be stated and proved separately.  All statements should be interpreted in the context of algorithm~\ref{alg:linearExtension}; that is, $G$, $S_1,\ldots,S_n$, $x$, $y$, $P_y$, $C$, $S$, etc. have the meaning assigned to them in the algorithm.  

\begin{baseCase}
Invariant~\ref{inv:consistency} holds before any iteration of the outer for-loop.
\end{baseCase}

\begin{proof}
The initial ordered partition $\{x\},V(G) - \{x\}$ is consistent with a linear extension by virtue of $x$ being a source vertex.  Since the input graph $G$ is prime, it must be connected, and therefore $x$ must have at least one incident active edge with respect to the slice $S_1$, the other endpoint of which is obviously in a different partition class than is $x$.  The remaining condition of invariant~\ref{inv:consistency} holds vacuously.
\end{proof}

\begin{inductionHypothesis} 
Consider the $i^{th}$ iteration of the outer for-loop.  Then invariant~\ref{inv:consistency} holds after all prior iterations.  
\end{inductionHypothesis}

\paragraph{Induction Step} This will be comprised of three separate lemmas, one for each condition of invariant~\ref{inv:consistency}.  

\begin{lemma}
Assuming the induction hypothesis, condition~1 of invariant~\ref{inv:consistency} holds after the $i^{th}$ iteration of the outer for-loop.
\end{lemma}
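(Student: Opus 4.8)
The plan is to establish condition~1 for $S_i$, and it suffices to verify it when the inner for-loop of the $i$th iteration has just finished, since that loop is the iteration's last action. I would first dispose of a degenerate case: if $N(y)\cap S_i=\emptyset$ then $S_i$ has no active edges --- either $|S_i|=1$, or $y$ is isolated in $G[S_i]$ and remark~\ref{rem:disconnected} forces $S_i-\{y\}$ to be the only maximal subslice besides $y$, with no edges between it and $\{y\}$ --- so condition~1 is vacuous. Otherwise $N(y)\cap S_i=\alpha_i(y)\ne\emptyset$; writing $y,T_1,\ldots,T_k$ for the maximal subslices of $S_i$ in order, remarks~\ref{rem:maximalSlice} and~\ref{rem:neighbour} give $T_1=\alpha_i(y)$, and the active edges of $S_i$ are exactly the edges $yw$ with $w\in T_1$ together with the edges between $T_a$ and $T_b$ for $1\le a<b$. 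I would also note two facts used repeatedly: the refinements of Algorithm~\ref{alg:linearExtension} never merge partition classes, so once two vertices lie in distinct classes they do so forever; and by the time the inner for-loop starts, $y$ sits alone in a singleton class $\{y\}$, guaranteed by the re-split of $P_y$ just before the loop, whatever the while-loop did.

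The heart of the matter is the following claim about the inner for-loop, which I would prove by induction on the order in which vertices are processed (which, restricted to $S_i$, visits $y$ first, then the vertices of $T_1$, then those of $T_2$, and so on): \emph{when a connected vertex $c\in S_i$ is used as a pivot, the partition class then containing $c$ is disjoint from $\alpha_i(c)$.} For $c=y$ the class is $\{y\}$ and $y\notin T_1=\alpha_i(y)$, so this holds. For $c\in T_j$, suppose the class of $c$ contains some $t\in\alpha_i(c)$; then $ct$ is active, so $t$ lies in a maximal subslice of $S_i$ other than $T_j$. If $t=y$, this contradicts $y$ being alone. If $t\in T_b$ with $b<j$, then $\alpha_i(t)$ contains $c$, so $t$ is connected and is processed before $c$; by the induction hypothesis $GeneralizedPivot(\mathcal{P},t,\alpha_i(t))$ found $t$'s class disjoint from $\alpha_i(t)$, so by the action of Algorithm~\ref{alg:generalizedPivot} it left every partition class a subset of $\alpha_i(t)$ or disjoint from it --- a property that no later refinement can destroy --- and since $c\in\alpha_i(t)$ while $t\notin\alpha_i(t)$, $c$ and $t$ already lie in different classes, a contradiction. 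If $t\in T_b$ with $b>j$, then item~3 of remark~\ref{rem:maximalSlice} (taking $y$ as the witness when $j=1$) supplies a vertex $p$ in an earlier subslice that is universal to $T_j$ and isolated from $T_b$; then $p$ is connected, is processed before $c$, and $\alpha_i(p)$ contains $c$ but not $t$, so the same reasoning applied to $p$'s pivot separates $c$ from $t$, a contradiction.

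Granting the claim, I would conclude as follows. Every connected vertex $c$ of $S_i$ is pivoted during the inner for-loop, and by the claim its pivot leaves no partition class split by $\alpha_i(c)$; since this ``not split by $\alpha_i(c)$'' property survives all later refinement, it still holds at the end of the loop. Hence, afterward, any two vertices in a common partition class agree on membership in $\alpha_i(c)$ for every connected $c$ of $S_i$. Now take an active edge $uv$ of $S_i$: both endpoints are connected for $S_i$, each lying in the other's $\alpha_i$-set. Were $u$ and $v$ in the same class, applying the above with $c=u$ would give $u\in\alpha_i(u)$, which is absurd, so $u$ and $v$ lie in different classes. And for any vertex $t$ in $v$'s class, applying it with $c=u$ gives $t\in\alpha_i(u)\subseteq N(u)$, so $u$ is universal to $v$'s class; by symmetry $v$ is universal to $u$'s class. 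That is exactly condition~1.

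The step I expect to be the main obstacle is the induction in the second paragraph: it amounts to showing that processing the maximal subslices in the LBFS order ``resolves'' each subslice --- separating its vertices from those of every other subslice --- before that subslice's own vertices are pivoted, and this is precisely where item~3 of remark~\ref{rem:maximalSlice} and the processing order are indispensable. A secondary thing to watch is that the while-loop's moving of co-component blocks not disturb this set-up; but since that round only refines $P_y$ and ends with $y$ isolated --- all that the inner for-loop relies upon --- it causes no difficulty.
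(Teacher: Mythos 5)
Your proof is correct and takes essentially the same route as the paper: an induction over the iterations of the inner for-loop whose key step uses item~3 of remark~\ref{rem:maximalSlice} to produce an earlier-pivoted vertex distinguishing the two endpoints of a hypothetical unseparated active edge, combined with the observation that partition classes are never merged. Your write-up is somewhat more explicit than the paper's in spelling out the degenerate case $N(y)\cap S_i=\emptyset$ and in deriving the universality clause of condition~1, which the paper dispatches as a corollary of remark~\ref{rem:generalizedPivot}.
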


\begin{proof}
To prove the first part of condition~1 of invariant~\ref{inv:consistency} we will instead prove the following claim: immediately prior to the $j^{th}$ iteration of the inner for-loop, all active edges with respect to $S_i$ that are incident to the first $j+1$ vertices in $S_i$ are such that their endpoints reside in different partition classes in $\mathcal{P}$.  

This cearly holds prior to any iteration of the inner for loop because the initial vertex for $S_i$ resides in its own partition class prior to any iteration of the inner for-loop.  So consider the $j^{th}$ iteration of the inner for-loop and assume the claim holds after all prior iterations.  

Let $z$ be the $j+1^{st}$ vertex in $S_i$ and choose $P \in \mathcal{P}$ be such that $z \in P$.  Suppose for contradiction that there exists a vertex $v \in \alpha_i(z)$ such that $v \in P$ as well.  Notice that $v$ cannot be amongst the first $j+1$ vertices in $S_i$.  Given the base case, it will be assumed that $z$ is not the initial vertex for $S_i$.  Therefore, by condition~3 of remark~\ref{rem:maximalSlice}, there is a vertex $u$ that is earlier than $z$ in $S_i$ such that $u$ is adjacent to $z$ but not adjacent to $v$, and in this context, $z \in \alpha_i(u)$ while $v \notin \alpha_i(u)$.  

Suppose that $u$ is the $\ell^{th}$ vertex in $S_i$.  Then $\ell \le j$.  So by the induction hypothesis, immediately prior to the $\ell - 1^{st}$ iteration of the inner for-loop, $u$ and $z$ resided in different partition classes in $\mathcal{P}$.  However, $v$ and $z$ must reside in the same partition class at this point as partition classes are clearly never merged during the inner for-loop.  So by remark~\ref{rem:generalizedPivot}, immediately prior to the $\ell^{th}$ iteration of the inner for-loop, $v$ and $z$ must reside in different partition classes in $\mathcal{P}$.  The desired contradiction follows by once again noting that no union ever occurs between a subset of one partition class and a subset of another partition class during the inner for-loop.  

This proves the claim and implies condition~1 of invariant~\ref{inv:consistency}.  Note that the second part of that condition follows as a corollary to the claim by applying remark~\ref{rem:generalizedPivot} to every non-vacuous iteration of the inner for loop.
\end{proof}

\begin{lemma} \label{lem:invariantItem2}
Assuming the induction hypothesis, condition~2 of invariant~\ref{inv:consistency} holds after the $i^{th}$ iteration of the outer for-loop.
\end{lemma}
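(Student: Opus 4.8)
The plan is to argue by contradiction using primality: assume that after the $i^{th}$ iteration of the outer for-loop \emph{no} active edge of $S_{i+1}$ has its two endpoints in different partition classes, and produce a non-trivial module of $G$. First, if $|S_{i+1}| = 1$ then $S_{i+1}$ has no active edge at all and the claim is vacuous, so assume $|S_{i+1}| \ge 2$.

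The key sublemma is that external vertices cannot tell apart two vertices of $S_{i+1}$ that currently share a class: if $u, v \in S_{i+1}$ lie in the same partition class after the $i^{th}$ iteration, then $N(u) \setminus S_{i+1} = N(v) \setminus S_{i+1}$. To see this, take any $z \in N(u) \setminus S_{i+1}$. By remark~\ref{rem:activeOnce} the edge $uz$ is active for exactly one slice, namely the smallest slice $S_j$ containing both $u$ and $z$; since $z \notin S_{i+1}$ this slice properly contains $S_{i+1}$, hence $j \le i$ by remark~\ref{rem:sliceTreeAncestor}. By the induction hypothesis (and by the preceding lemma when $j = i$), condition~1 of invariant~\ref{inv:consistency} holds with respect to $S_j$ after the $j^{th}$ iteration, and therefore still after the $i^{th}$, since the partition only refines from iteration to iteration. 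In particular $z$ is universal to the class containing $u$; as $v$ lies in that class, $z \in N(v)$. The symmetric argument gives the reverse inclusion.

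Next I isolate the relevant part of $S_{i+1}$. Let $w$ be the initial vertex of $S_{i+1}$ and $w, T_1, \ldots, T_m$ its maximal subslices in order. Because $S_{i+1}$ has an active edge, item~1 of remark~\ref{rem:maximalSlice} forces $T_1 = N(w) \cap S_{i+1} \neq \emptyset$, so $wt$ is active for $S_{i+1}$ for each $t \in T_1$. Let $D$ be the vertex set of the connected component of $w$ in the graph on $S_{i+1}$ whose edges are precisely the active edges of $S_{i+1}$. Using lemma~\ref{lem:universalSlice} to supply, for each subslice other than a disconnected last one, a vertex in an earlier subslice universal to it, and using remark~\ref{rem:disconnected} to handle the disconnected case, I would verify that $D$ is a union of whole subslices (all of $S_{i+1}$ when $G[S_{i+1}]$ is connected, and all but the last subslice otherwise), that $2 \le |D| < |V(G)|$, and that every vertex of $S_{i+1} \setminus D$ is isolated from $D$ in $G$ (this set is non-empty only when $G[S_{i+1}]$ is disconnected, in which case it is the disconnected last subslice, lying in a different component of $G[S_{i+1}]$ than $w$).

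Finally I combine the two parts. Under the contradiction hypothesis every active edge of $S_{i+1}$ joins two vertices of a common class, hence, by the sublemma, two vertices with the same neighbourhood outside $S_{i+1}$. Since $D$ is connected through active edges of $S_{i+1}$, all of $D$ then has one common neighbourhood outside $S_{i+1}$, so every vertex of $V(G) \setminus S_{i+1}$ is universal to or isolated from $D$; and every vertex of $S_{i+1} \setminus D$ is as well. Thus $D$ is a module of $G$, and since $2 \le |D| < |V(G)|$ it is non-trivial, contradicting the primality of $G$. The step I expect to be the main obstacle is the third paragraph's structural analysis: checking through remark~\ref{rem:maximalSlice}, lemma~\ref{lem:universalSlice} and remark~\ref{rem:disconnected} that $D$ is exactly a union of whole subslices, that $S_{i+1} \setminus D$ is genuinely non-adjacent to $D$, and that $|D| \ge 2$ — in particular treating the disconnected-slice case and the case $T_1 = S_{i+1} \setminus \{w\}$ on their own, even though each such check is routine.
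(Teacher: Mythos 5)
Your proof is correct and follows essentially the same route as the paper's: both show that if every active edge of $S_{i+1}$ had its endpoints trapped in a common class, then the set of connected vertices of $S_{i+1}$ (your $D$, the paper's $C'$) would be a non-trivial module of the prime graph $G$, using remark~\ref{rem:disconnected} for the structure of $D$ inside $S_{i+1}$ and the universality clause of condition~1 of invariant~\ref{inv:consistency} (applied to the earlier slice for which each external edge is active) to control vertices outside $S_{i+1}$. The paper phrases the final step as picking a splitter $q$ of $C'$ and deriving a contradiction rather than proving equality of external neighbourhoods, but the content is the same.
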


\begin{proof}
If there is no active edge for $S_{i+1}$ then the lemma follows vacuously.  So assume that at least one such edge exists.  Note that both its endpoints are connected.  Now assume for contradiction that after the $i^{th}$ iteration of the outer for-loop, the endpoints of all such active edges reside in the same partition classes in $\mathcal{P}$.  Let $C'$ be the set of connected vertices for $S_{i+1}$.  Note that $C'$ is non-trivial by assumption.  Furthermore, there is a partition class $P \in \mathcal{P}$ such that $C' \subseteq P$.  Notice as well that $C'$ is a non-trivial module in $G[S_{i+1}]$, by remark~\ref{rem:disconnected}.  But since $G$ is prime, it is not a non-trivial module for $G$, hence, there is a vertex $q \notin S_{i+1}$ and vertices $u,v \in C'$ such that $q$ is adjacent to precisely one of them, say $u$.  Let $S_j$ be the least common ancestor of $q$ and $u$ in the slice tree.  Note that $j < {i+1}$, by item~1 of remark~\ref{rem:sliceTreeAncestor}.  Note as well that $uq$ is an active edge for $S_j$, by choice of $S_j$.  So by condition~1 of invariant~\ref{inv:consistency} and our induction hypothesis, $u$ and $q$ reside in different partition classes at the end of the $j^{th}$ iteration of the outer for loop.  Note that at no point during algorithm~\ref{alg:linearExtension} is there any union between some subset of one partition class and a subset of a different partition class.  Therefore $u$ and $v$ are in the same partition class after the $j^{th}$ iteration of the outer for loop.  So by condition~1 of invariant~\ref{inv:consistency} and our induction hypothesis, $q$ must be adjacent to $v$, a contradiction.  
\end{proof}

\begin{lemma}
Assuming the induction hypothesis, condition~3 of invariant~\ref{inv:consistency} holds after the $i^{th}$ iteration of the outer for-loop.
\end{lemma}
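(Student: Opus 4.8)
The plan is to show that consistency with a linear extension is preserved by \emph{every} operation that modifies $\mathcal{P}$ during the $i^{\text{th}}$ iteration of the outer for-loop; since the induction hypothesis supplies this property before the iteration begins, it then holds throughout the iteration and in particular at its end. Algorithm~\ref{alg:linearExtension} changes $\mathcal{P}$ in exactly three ways during an iteration: the paired calls $PivotPull(\mathcal{P},z,P_y,C)$ and $PivotPush(\mathcal{P},z,P_y,S)$ inside the while-loop, the single line replacing $P_y$ by $\{y\},P_y-\{y\}$, and the calls $GeneralizedPivot(\mathcal{P},z,\alpha_i(z))$ inside the inner for-loop. Recall the observation recorded after algorithm~\ref{alg:pivot} that an ordinary pivot takes a partition consistent with a linear extension to one that is again consistent. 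The new operations differ from an ordinary pivot only in that they refine using a \emph{subset} of a vertex's neighbourhood and occasionally reverse the order of the two resulting classes; so what must be checked in each case is that the subset in question is oriented uniformly enough, and in the direction matching the order actually chosen, in every transitive orientation of $G$. This is precisely the content of lemmas~\ref{lem:coSameDir}, \ref{lem:sameDirection}, and~\ref{lem:sliceSameDir}, which I would invoke repeatedly.

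For the while-loop, fix an iteration with pivot $z$; by the loop condition $z\notin P_y$ and $z$ is connected for $S_i$. If $z\in\alpha_i(y)$, then by remark~\ref{rem:neighbour} $\alpha_i(y)=N(y)\cap S_i$ is the first maximal subslice of $S_i$, and $C$ is its co-component containing $z$; moreover $S=\alpha_i(z)\cup(\alpha_i(y)-C)\subseteq N(z)$ (the second piece because distinct co-components are joined), and both $S$ and $y$ lie in $S_i-C$. Lemma~\ref{lem:coSameDir} then says that in every transitive orientation all edges from $z$ into $S_i-C$ are directed the same way, which is exactly what licenses pulling $C\cap P_y$ to $z$'s side of $P_y$ via $PivotPull$ — the co-component $C$ moving as a single block, as lemma~\ref{lem:coSameDir} permits — and then pushing $S\cap P_y$ to the far side via $PivotPush$, with lemmas~\ref{lem:sameDirection} and~\ref{lem:sliceSameDir} supplying the corresponding uniformity when $z$ sits in a later subslice. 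The case $z\notin\alpha_i(y)$ has $C=\emptyset$, so $PivotPull$ is vacuous and only the push with $S=\alpha_i(z)$ takes place, again justified by lemmas~\ref{lem:sameDirection} and~\ref{lem:sliceSameDir}. When the while-loop terminates, $P_y$ contains no neighbour of $y$ — a fact I would derive from the loop's exit condition together with remark~\ref{rem:neighbour}, lemma~\ref{lem:coSameDir}, remark~\ref{rem:disconnected}, and condition~2 of invariant~\ref{inv:consistency}, using that partition classes are never merged — so the replacement of $P_y$ by $\{y\},P_y-\{y\}$ merely moves $y$ past vertices it is non-adjacent to, hence induces the same orientation under any linear extension consistent with $\mathcal{P}$ and preserves consistency.

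Finally, the inner for-loop runs through the connected vertices of $S_i$ in LBFS order and calls $GeneralizedPivot(\mathcal{P},z,\alpha_i(z))$ for each; since $y$ is already isolated in its own class and the vertices are processed in order, these calls cumulatively refine $\mathcal{P}$ into an order-refinement of the maximal subslices of $S_i$, and lemma~\ref{lem:sliceSameDir} (for edges between subslices $S'_a$ and $S'_b$ with $1<a<b$) together with lemma~\ref{lem:coSameDir} (for the first subslice) pins the orientation of those blocks to a single order that is consistent with a linear extension — mirroring how a sequence of ordinary pivots taken from $S_i$ with full neighbourhoods would refine the partition while, by the observation after algorithm~\ref{alg:pivot}, preserving consistency. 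I expect the main obstacle to be making this ``mirroring'' rigorous at the level of individual splits rather than of the final partition: I must verify that each $PivotPull$, $PivotPush$, and $GeneralizedPivot$ call — including the ``block'' treatment of $\alpha_i(y)$ and $y$ used in the implementation — splits a class into an \emph{ordered} pair whose order is exactly the one forced by lemmas~\ref{lem:coSameDir}, \ref{lem:sliceSameDir}, and~\ref{lem:sameDirection} for the driving vertex $z$ (being careful with the reversed-order conventions of algorithms~\ref{alg:pivotPull} and~\ref{alg:pivotPush}), and that consistency, once established, survives each such split.
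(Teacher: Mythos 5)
Your overall architecture matches the paper's: it decomposes the iteration into the same three phases (the while-loop, the line isolating $y$, and the inner for-loop) and invokes the same uniform-direction lemmas (lemma~\ref{lem:coSameDir} for co-components, lemmas~\ref{lem:sameDirection} and~\ref{lem:sliceSameDir} for subslices). However, there is a genuine gap at the middle step. You assert that when the while-loop terminates, $P_y$ contains no neighbour of $y$, and propose to derive this from ``the loop's exit condition together with remark~\ref{rem:neighbour}, lemma~\ref{lem:coSameDir}, remark~\ref{rem:disconnected}, and condition~2 of invariant~\ref{inv:consistency}.'' Those ingredients do not suffice. The while-loop only ever pivots on vertices $z \in S_i - P_y$, so nothing in the loop mechanics forces a neighbour of $y$ that starts in $P_y$ to ever leave $P_y$; the exit condition merely says every connected vertex already outside $P_y$ has been used. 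The paper's proof of this claim is a substantive module argument: it defines $M$ as the union of $\{y\}$, $N(y)\cap P_y\cap S_{i+1}$, and the maximal subslices of $S_{i+1}$ contained in $P_y$, shows $M$ is a module of $G[S_{i+1}]$ and then of $G$, and concludes from the \emph{primality} of $G$ that $M$ is trivial, forcing $N(y)\cap P_y\cap S_{i+1}=\emptyset$. Primality is exactly the hypothesis that guarantees refinement cannot stall with neighbours of $y$ stuck in $P_y$, and it does not appear anywhere in your sketch of this step.

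A secondary weakness: for the $PivotPush$ step you rely only on lemmas~\ref{lem:sameDirection} and~\ref{lem:sliceSameDir}, but those lemmas only govern edges whose endpoints lie in distinct maximal subslices of $S_{i+1}$. An edge $uv$ with $v \in P_y\cap S$ and $u \in P_y - S$ where $u \notin S_{i+1}$ is not covered by them; the paper disposes of such edges by observing that $uv$ is active for some earlier slice $S_j$, $j<i+1$, so by condition~1 of invariant~\ref{inv:consistency} (from the induction hypothesis) $u$ and $v$ would already lie in different partition classes, contradicting $u,v\in P_y$. The same device is needed in the inner for-loop to conclude that a neighbour $u$ of the pivot $z$ with $u\notin\alpha_{i+1}(z)$ must share $z$'s maximal subslice. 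Your plan should make explicit this use of condition~1 of the invariant; the structural lemmas about $S_{i+1}$ alone do not close these cases.
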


\begin{proof}
Consider the $i+1^{st}$ iteration of the outer for loop.  We will prove that $\mathcal{P}$ is consistent with a linear extension after this iteration by proving the following three claims: 

\begin{enumerate}

\item $\mathcal{P}$ is consistent with a linear extension after every iteration of the while loop;  

\item $N(y) \cap P_y = \emptyset$ after the while loop has completed;

\item $\mathcal{P}$ is consistent with a linear extension after every iteration of the inner for loop.

\end{enumerate}

\paragraph{Claim 1:} Notice that $\mathcal{P}$ is consistent with a linear extension prior to any iteration of the while loop by our induction hypothesis.  So assume $\mathcal{P}$ is consistent with a linear extension after some number of iterations of the while loop.  Note that on any iteration of the while loop, $\mathcal{P}$ only changes by the splitting of $P_y$ as a result of calls to algorithm~\ref{alg:pivotPull} and algorithm~\ref{alg:pivotPush}.  

After the call to algorithm~\ref{alg:pivotPull}, $P$ is replaced in $\mathcal{P}$ with $P_y - C$,$C \cap P_y$ (not necessarily in that order).  Observe that if $\mathcal{P}$ is no longer consistent with a linear extension, it is because of an edge with one endpoint in $P_y - C$ and the other in $C \cap P_y$.  However, by lemma~\ref{lem:coSameDir}, these edges must share the same direction as those between $z$ and $C$, and therefore $\mathcal{P}$ remains consistent with a linear extension after the call to algorithm~\ref{alg:pivotPull}.  

For the call to algorithm~\ref{alg:pivotPush}, $P_y$ is replaced with $P_y - S$ and $P_y \cap S$ (not necessarily in that order).  Observe that $z$ is universal to $P_y \cap S$.  We argue that all edges between $P_y - S$ and $P_y \cap S$ must be directed the same as those between $P_y - \cap S$ and $z$.  Suppose for contradiction there is an edge that is not, between say $u \in P_y - S$ and $v \in P_y \cap S$.  If $z$ is not adjacent to $u$, then the contradiction follows immediately.  Therefore $u$ and $z$ must be adjacent.  

We now have two cases.  First, assume that $z \in \alpha_{i+1}(y)$.  Then $u \notin S_{i+1}$, because if it were, then either $u \in \alpha_{i+1}(y)$ or $u \in \alpha_{i+1}(z)$, neither of which hold by definition of $S$ in this case.  So let $S_j$ be the least common ancestor of $u$ and $v$ in the slice-tree.  Notice that $j < i+1$ and the edge $uv$ is active for $S_j$.  Also notice that since algorithm~\ref{alg:linearExtension} clearly never unions a subset of one partition class with a subset of another partition class, and since $u,v \in P_y$ prior to the call to algorithm~\ref{alg:pivotPush}, it follows that $u$ and $v$ were in the same partition class at the conclusion of the $j^{th}$ iteration of the outer for loop.  But this contradicts condition~1 of invariant~\ref{inv:consistency} and our induction hypothesis.

So now assume that $z \notin \alpha_{i+1}(y)$.  Then $S = \alpha_{i+1}(z)$, meaning $z$ and $v$ are in different maximal sub slices of $S_{i+1}$.  Furthermore, $u \notin \alpha_{i+1}(z)$, meaning either $u \notin S_{i+1}$ or there is a maximal subslice of $S_{i+1}$, call it $S$, such that $u,z \in S$.  A similar argument as in the previous case rules out the former possibility.  However, the latter case is ruled out by lemma~\ref{lem:sameDirection}.     

\paragraph{Claim 2:}  Let $M$ be the set defined as the union of $\{y\}$, $N(y) \cap P_y \cap S_{i+1}$, and all maximal subsets of $S_{i+1}$ that are subsets of $P_y$.  We will show that $M$ is a module for $G[S_{i+1}]$.  Consider any vertex $q \in S_{i+1} - P_y$.  We have two cases, depending on whether $q$ is adjacent to $y$.  First consider the case where $q$ is adjacent to $y$.  Then by definition, $q \in \alpha_{i+1}(y)$, and so $q$ is connected for $S_{i+1}$.  As $q \notin P_y$, we know that $q$ was pivot on some iteration of the while loop.  So by remark~\ref{rem:pivotPush} and the definition of $S$ in algorithm~\ref{alg:linearExtension} in this case, $q$ is universal to $P_y$ and hence universal to $M$.  

Now consider the case where $q$ is not adjacent to $y$.  So in this case $q \notin \alpha_{i+1}(y)$.  Let $S'$ be the maximal subslice of $S_{i+1}$ such that $q \in S'$.  Note that there is no vertex $u \in S' \cap M$ by definition of $M$ and since $q \notin P_y$.  Also note that if $q$ is disconnected, then so are all vertices in $S'$, and in this case, all vertices in $S'$ are isolated from $M$, by definition.  So assume that $q$ is connected.  Then it was pivot on some iteration of the while loop, since $q \notin P_y$.  So by remark~\ref{rem:pivotPush} and the definition of $S$ in algorithm\ref{alg:linearExtension} in this case,  we have that $q$ is isolated from $P_y$ and hence $M$.  We will now show that all vertices in $(S' \cap P_y) - M$ are similarly isolated from $M$ to conclude that $M$ is a module for $G[S_{i+}$.

Consider some vertex $r \in (S' \cap P_y) - M$.  Assume for contradiction that $r$ is adjacent to some vertex $u \in M$.  Notice that $u$ cannot appear in an earlier maximal subslice than $S'$ due to item~2 of remark~\ref{rem:maximalSlice}.  So $u$ appears in a later maximal subslice than $S'$, call it $S''$.  Then by item~3 of remark~\ref{rem:maximalSlice}, there is a vertex $v \notin S' \cup S''$ that is universal to $S'$ and isolated from $S''$.  So $q,r \in \alpha_{i+1}(v)$ but $u \notin \alpha_{i+1}(v)$, and $v$ is connected for $S_{i+1}$.  And recall that $q$ is isolated from $P_y$.  Therefore $v \notin P_y$.  Hence, $q$ was pivot on some iteration of the while loop.  Since $S''$ appears later than $S'$, and $q \notin \alpha_{i+1}(y)$, we also have $u \notin \alpha_{i+1}(y)$.  Thus, after the iteration on which $v$ was pivot, we cannot have $u,r \in P_y$ as we do, by remark~\ref{rem:pivotPush}.

From above, it follows that $M$ is a module for $G[S_{i+1}]$.  Applying a similar argument as in lemma~\ref{lem:invariantItem2}, we can conclude that $M$ is a module for $G$.  However, since $G$ is prime, this module must be trivial.  But we know $y \in M$, by definition, so it must be that $N(y) \cap P_y \cap S_{i+1} = \emptyset$.  In other words, no neighbour of $y$ in $S_{i+1}$ appears in $P_y$.  We conclude the claim by citing item~1 of invariant~\ref{inv:consistency} and lemma~\ref{lem:activeEdgePartition}.

\paragraph{Claim 3:} By claim 2 above, $\mathcal{P}$ is consistent with a linear extension prior to the first iteration of the inner for loop.  So assume it is consistent with a linear extension after some number of iterations of the inner for loop.  Consider the pivot $z$ as defined by the next iteration of the inner for loop.  Suppose that during this iteration, $z$ splits a partition class, replacing it with $A,B$, in order according to algorithm~\ref{alg:generalizedPivot}.  Note that $z \notin A \cup B$.  Assume that $z$ appears in a partition class earlier than $A$ in $\mathcal{P}$ (the other case is symmetric).  It suffices to show that all edges between $A$ and $B$ are directed toward $B$.  

Let $uv$ be such an edge, and say $u \in A$ and $b \in B$.  It follows that $u \notin \alpha_{i+1}(z)$ and $v \in \alpha_{i+1}(z)$.  If $u \notin N(z)$, then the edge $uv$ must clearly be directed toward $v$ as it is here.  So assume that $u \in N(z)$.  Then by condition~1 of invariant~\ref{inv:consistency} along with the induction hypothesis, it follows that the edge $uz$ is not active for any $S_j, j \le i + i$.  Therefore $u$ and $z$ must be in the same maximal subslice for $S_{i+1}$.  By lemma~\ref{lem:sameDirection}, it suffices to consider the case where $u,z \in N(y)$.  Here we have $u,z \in \alpha_{i+1}(y)$ as well.  

Observe that $v \neq y$ because on the prior iteration of the inner for loop, both $u$ and $v$ resided in the same partition class, and clearly $y$ resides in its own singleton partition class throughout the inner for loop.  Since $v \in \alpha_{i+1}(z)$, it follows that $v \notin N(y)$ in this case, and therefore $v \notin \alpha_{i+1}(y)$.  It follows that on the first iteration of the inner for loop, $y$ would have split the partition class containing $u$ and $v$, contradicting our assumption regarding $A$ and $B$.  It follows that $u,v \notin N(y)$.  Therefore $\mathcal{P}$ is consistent with a linear extension after each iteration of the inner for loop.  
\end{proof}

\section{Conclusion} \label{sec:conclusion}

This paper describes a linear-time algorithm to compute a linear extension of a prime comparability graph.  It does so by extending the elegant partition refinement framework of~\cite{m.mcconnell:linear-time}.  Some initialization is required as part of its implementation, but this work only requires tree and list traversals.  This represents a significant simplification over the only other linear-time algorithm for the problem due to~\cite{m.mcconnell:linearTransitive}.  It was achieved by using LBFS to accomplish the same effect as modular decomposition in the earlier algorithm.  

The main problem with that earlier algorithm was its reliance on the notoriously difficult linear-time modular decomposition algorithm from that same paper.  A simpler, linear-time alternative has since been developed in~\cite{tedder:modular} and could likely be used in its place, but it would not accomplish the same degree of simplification as observed here with LBFS.  Nevertheless, a reduction from modular decomposition is still required to compute a linear extension for non-prime comparability graphs, and for that purpose, the simpler, linear-time modular decomposition algorithm of~\cite{tedder:modular} is the preferred choice.  It can be viewed as the companion to this paper.  Many of its insights contributed to the development of the algorithm presented here.

It is important to draw the distinction between the transitive orientation problem and the \emph{comparability graph recognition problem}.  As already stated, the transitive orientation problem asks you to orient the edges of a comparability graph.  In that case, you are given a comparability graph.  In contrast, the recognition problem gives you an arbitrary graph and asks you to determine if it is a comparability graph.  One solution would be to apply the transitive orientation algorithm from this paper and then verify if the resulting orientation is actually transitive.  Rather surprisingly, this verification is the harder problem, at least in terms of running-time.  The book~\cite{spinrad:book} has a good discussion on this.  As summarized there, the problem has been shown to be equivalent to finding a triangle in a graph, and in both cases, the fastest algorithm requires matrix multiplication, whose complexity is currently~$O(n^{2.373})$~\cite{davie:matrix}.  Some kind of hardness result in terms of matrix multiplication or any more efficient algorithm would be a significant result.

Fortunately, a transitive orientation of a comparability graph is frequently all that is required for many applications~\cite{m.mcconnell:linear-time}.  Other applications involving co-comparability graphs and permutation graphs require a transitive orientation of the complement graph.  One of these applications is permutation graph recognition.  For efficiency reasons, it is preferable if an orientation of the complement can be computed without explicitly computing that complement.  This is possible by implicitly representing such an orientation by a linear extension.  With minor modifications, the previous linear-time transitive orientation algorithm of~\cite{m.mcconnell:linearTransitive} can do so in linear-time.  The algorithm developed in this paper is capable of the same.  Among other implications, this would result in a simpler linear-time permutation graph recognition algorithm than the one presented in~\cite{m.mcconnell:linearTransitive}, which is the only other to date.  

Two changes are required to transform the algorithm developed in this paper so that it computes a transitive orientation of the complement graph.  First, as in~\cite{habib:stacs}, the order of the subdivided classes in algorithm~\ref{alg:pivot} would need to be reversed.  Doing so has the effect of computing a source vertex for the complement while also maintaining consistency with a linear extension of the complement throughout refinement.  Moreover,~\cite{habib:stacs} showed that the same change results in an LBFS ordering of the complement when applied to the partition refinement implementation of LBFS.  The second modification concerns the special role played by the co-components of slices.  This would now have to be played by the components of slices.  But to avoid ``touching'' disconnected vertices and thereby maintain linear-time, rather than co-components being ``pulled'' out of $P_y$ by their other halves during the first round of refinement, components will have to be ``pushed out'' from within $P_y$ to join their other halves.  The proof of correctness and running-time would be similar to the one in this paper.  

\section{Acknolwedgements}

The author wishes to thank Professor Derek Corneil of the University of Toronto for suggesting the problem and providing guidance and supervision during the research and preparation of this paper.  This paper bears his imprint, and is the result of many fruitful discussions between him and the author.  The author especially wishes to thank Professor Corneil for his unwavering support throughout the preparation of this paper.  He has been instrumental in ensuring the eventual presentation of this work here.

\bibliographystyle{plain}
\bibliography{TransitiveOrientation}

\end{document}